\newtheorem{theorem}{Theorem}
\newtheorem{proposition}[theorem]{Proposition}
\newtheorem{lemma}[theorem]{Lemma}
\newtheorem{definition}[theorem]{Definition}
\newcommand{\N}{\mathcal{N}}
\newcommand{\Ptn}{\mathcal{P}}
\newcommand{\A}{\mathcal{A}}
\newcommand{\K}{\mathcal{K}}
\begin{document}

\title{Genuinely Multipartite Noncausality}

\date{December 7, 2017}%\today}

\author{Alastair A. Abbott}
%\email{alastair.abbott@neel.cnrs.fr}
\orcid{0000-0002-2759-633X}
\affiliation{University Grenoble Alpes, CNRS, Grenoble INP, Institut N\'eel, 38000 Grenoble, France}

\author{Julian Wechs}
%\email{julian.wechs@neel.cnrs.fr}
\orcid{0000-0002-0395-6791}
\affiliation{University Grenoble Alpes, CNRS, Grenoble INP, Institut N\'eel, 38000 Grenoble, France}

\author{Fabio Costa}
\orcid{0000-0002-6547-6005}
%\email{f.costa@uq.edu.au}
\affiliation{Centre for Engineered Quantum Systems, School of Mathematics and Physics, The University of Queensland, St Lucia, QLD 4072, Australia}

\author{Cyril Branciard}
\orcid{0000-0001-9460-825X}
%\email{cyril.branciard@neel.cnrs.fr}
\affiliation{University Grenoble Alpes, CNRS, Grenoble INP, Institut N\'eel, 38000 Grenoble, France}

\maketitle

\begin{abstract}

The study of correlations with no definite causal order has revealed a rich structure emerging when more than two parties are involved. 
This motivates the consideration of multipartite ``noncausal'' correlations that cannot be realised even if noncausal resources are made available to a smaller number of parties. 
Here we formalise this notion: genuinely $N$-partite noncausal correlations are those that cannot be produced by grouping $N$ parties into two or more subsets, where a causal order between the subsets exists. 
We prove that such correlations can be characterised as lying outside a polytope, whose vertices correspond to deterministic strategies and whose facets define what we call ``2-causal'' inequalities. 
We show that genuinely multipartite noncausal correlations arise within the process matrix formalism, where quantum mechanics holds locally but no global causal structure is assumed, although for some inequalities no violation was found. 
We further introduce two refined definitions that allow one to quantify, in different ways, to what extent noncausal correlations correspond to a genuinely multipartite resource.

\end{abstract}

\section{Introduction}

Understanding the correlations between events, or between the parties that observe them, is a central objective in science. 
In order to provide an explanation for a given correlation, one typically refers to the notion of causality and embeds events (or parties) into a \emph{causal structure}, that defines a \emph{causal order} between them~\cite{Reichenbachbook,Pearlbook}. 
Correlations that can be explained in such a way, i.e., that can be established according to a definite causal order, are said to be \emph{causal}~\cite{brukner14}.

The study of causal correlations has gained a lot of interest recently as a result of the realisation that more general frameworks can actually be considered, where the causal assumptions are weakened and in which \emph{noncausal correlations} can be obtained~\cite{oreshkov12}. 
Investigations of causal versus noncausal correlations first focused on the simplest bipartite case~\cite{oreshkov12,branciard16}, and were soon extended to multipartite scenarios, where a much richer situation is found~\cite{baumeler13,baumeler14,oreshkov16,abbott16}---this opens, for instance, the possibility for causal correlations to be established following a dynamical causal order, where the causal order between events may depend on events occurring beforehand~\cite{hardy2005probability}.
When analysing noncausal correlations in a multipartite setting, however, a natural question arises: is the noncausality of these correlations a truly multipartite phenomenon, or can it be reduced to a simpler one, that involves fewer parties? 
The goal of this paper is precisely to address this question, and provide criteria to justify whether one really deals with \emph{genuinely multipartite noncausality} or not.

To make things more precise, let us start with the case of two parties, $A$ and $B$. 
Each party receives an input $x$, $y$, and returns an output $a$, $b$, respectively. 
The correlations shared by $A$ and $B$ are described by the conditional probability distribution $P(a,b|x,y)$. 
If the two parties' events (returning an output upon receiving an input) are embedded into a fixed causal structure, then one could have that $A$ causally precedes $B$---a situation that we shall denote by $A \prec B$, and where $B$'s output may depend on $A$'s input but not vice versa: $P(a|x,y) = P(a|x)$---or that $B$ causally precedes $A$---$B \prec A$, where $P(b|x,y) = P(b|y)$. 
(It can also be that the correlation is not due to a direct causal relation between $A$ and $B$, but to some latent common cause; such a situation is however still compatible with an explanation in terms of $A \prec B$ or $B \prec A$, and is therefore encompassed in the previous two cases.) 
A causal correlation is defined as one that is compatible with either $A \prec B$ or $B \prec A$, or with a convex mixture thereof, which would describe a situation where the party that comes first is selected probabilistically in each run of the experiment~\cite{oreshkov12,brukner14}.

Adding a third party $C$ with input $z$ and output $c$, and taking into account the possibility of a dynamical causal order, a tripartite causal correlation is defined as one that is compatible with one party acting first---which one it is may again be chosen probabilistically---and such that whatever happens with that first party, the reduced bipartite correlation shared by the other two parties, conditioned on the input and output of the first party, is causal (see Definition~\ref{def:fully_causal} below for a more formal definition, and its recursive generalisation to $N$ parties)~\cite{oreshkov16,abbott16}. 
In contrast, a \emph{noncausal} tripartite correlation $P(a,b,c|x,y,z)$ \emph{cannot} for instance be decomposed as
\begin{equation}\label{eq:tripartite_2causal}
	P(a,b,c|x,y,z) = P(a|x) \,P_{x,a}(b,c|y,z) 
\end{equation}
with bipartite correlations $P_{x,a}(b,c|y,z)$ that are causal for each $x,a$. 
Nevertheless, such a decomposition may still be possible for a tripartite noncausal correlation if one does not demand that (all) the bipartite correlations $P_{x,a}(b,c|y,z)$ are causal. 
Without this constraint, the correlation~\eqref{eq:tripartite_2causal} is thus compatible with the ``coarse-grained'' causal order $A \prec \{B,C\}$, if $B$ and $C$ are grouped together to define a new ``effective party'' and act ``as one''.
This illustrates that although a multipartite correlation may be noncausal, there might still exist some definite causal order between certain subsets of parties; the intuition that motivates our work is that such a correlation would therefore not display genuinely multipartite noncausality.

This paper is organised as follows. 
In Sec.~\ref{sec:genuine}, we introduce the notion of \emph{genuinely $N$-partite noncausal correlations} in opposition to what we call \emph{2-causal correlations}, which can be established whenever two separate groups of parties can be causally ordered; we furthermore show how such correlations can be characterised via so-called \emph{2-causal inequalities}. 
In Sec.~\ref{sec:lazy_scenario}, as an illustration we analyse in detail the simplest nontrivial tripartite scenario where these concepts make sense; we present explicit 2-causal inequalities for that scenario, investigate their violations in the process matrix framework of Ref.~\cite{oreshkov12}, and generalise some of them to $N$-partite inequalities. 
In Sec.~\ref{sec:refining}, we propose two possible generalisations of the notion of 2-causal correlations, which we call \emph{$M$-causal} and \emph{size-$S$-causal correlations}, respectively. 
This allows one to refine the analysis, and provides two different hierarchies of criteria that quantify the extent to which the noncausality of a correlation is a genuinely multipartite phenomenon.

\section{Genuinely $N$-partite noncausal correlations}
\label{sec:genuine}

The  general multipartite scenario that we consider in this paper, and the notations we use, are the same as in Ref.~\cite{abbott16}. 
A finite number $N \ge 1$ of parties $A_k$ each receive an input $x_k$ from some finite set (which can in principle be different for each party) and generate an output $a_k$ that also belongs to some finite set (and which may also differ for each input).  
The vectors of inputs and outputs are denoted by $\vec x = (x_1, \ldots, x_N)$ and $\vec a = (a_1, \ldots, a_N)$. 
The correlations between the $N$ parties are given by the conditional probability distribution $P(\vec a|\vec x)$.
For some (nonempty) subset $\K = \{k_1, \ldots, k_{|\K|}\}$ of $\N \coloneqq \{1, \ldots, N\}$, we denote by $\vec x_\K = (x_{k_1}, \ldots, x_{k_{|\K|}})$ and $\vec a_\K = (a_{k_1}, \ldots, a_{k_{|\K|}})$ the vectors of inputs and outputs of the parties in $\K$; with this notation, $\vec x_{\N \backslash \K}$ and $\vec a_{\N \backslash \K}$ (or simply $\vec x_{\N \backslash k}$ and $\vec a_{\N \backslash k}$ for a singleton $\K = \{k\}$) denote the vectors of inputs and outputs of all parties that are not in $\K$.
For simplicity we will identify the parties' names with their labels, so that $\N = \{1, \ldots, N\} \equiv \{A_1, \ldots, A_N\}$, and similarly for any subset $\K$.

\subsection{Definitions}

The assumption that the parties in such a scenario are embedded into a well-defined causal structure restricts the correlations that they can establish.
In Refs.~\cite{oreshkov16,abbott16}, the most general correlations that are compatible with a definite causal order between the parties were studied and characterised.
Such correlations include those compatible with causal orders that are probabilistic or dynamical---that is, the operations of parties in the past can determine the causal order of parties in the future.
These so-called \textit{causal correlations}---which, for clarity, we shall often call \textit{fully causal} here---can be defined iteratively in the following way:
\begin{definition}[(Fully) causal correlations] \label{def:fully_causal} $ $
	\begin{itemize}
		\item For $N=1$, any valid probability distribution $P(a_1|x_1)$ is (fully) causal;
		\item For $N \ge 2$, an $N$-partite correlation is (fully) causal if and only if it can be decomposed in the form
		\begin{equation}
			P(\vec a|\vec x) = \sum_{k \in \N} \ q_k \ P_k(a_k|x_k) \ P_{k,x_k,a_k}(\vec a_{\N \backslash k}|\vec x_{\N \backslash k}) 
		\end{equation}
		with $q_k \ge 0$ for each $k$, $\sum_k q_k = 1$, where (for each $k$) $P_k(a_k|x_k)$ is a single-party probability distribution and (for each $k, x_k, a_k$) $P_{k,x_k,a_k}(\vec a_{\N \backslash k}|\vec x_{\N\backslash k})$ is a (fully) causal $(N{-}1)$-partite correlation.
	\end{itemize}
\end{definition}

As the tripartite example in the introduction shows, there can be situations in which no overall causal order exists, but where there still is a (``coarse-grained'') causal order between certain subsets of parties, obtained by grouping certain parties together. 
The correlations that can be established in such situations are more general than causal correlations, but nevertheless restricted due to the existence of this partial causal ordering.
If we want to identify the idea of noncausality as a \emph{genuinely $N$-partite} phenomenon, we should, however, exclude such correlations, and characterise correlations for which no subset of parties can have a definite causal relation to any other subset. 
This idea was already suggested in Ref.~\cite{abbott16}; here we define the concept precisely.

Note that if several different nonempty subsets do have definite causal relations to each other, then clearly there will be two subsets having a definite causal relation between them---one can consider the subset that comes first and group the remaining subsets together into the complementary subset, which then comes second. 
We shall for now consider partitions of $\N$ into just two (nonempty) subsets $\K$ and $\N \backslash \K$, and we thus introduce the following definition:

\begin{definition}[$2$-causal correlations] \label{def:2_causal}
	An $N$-partite correlation (for $N \ge 2$) is said to be $2$-causal if and only if it can be decomposed in the form
	\begin{equation}\label{eqdef:2_causal}
		P(\vec a|\vec x) =\hspace{-2mm} \sum_{\emptyset \subsetneq \K \subsetneq \N} q_\K \, P_\K(\vec a_\K|\vec x_\K) \, P_{\K,\vec x_\K,\vec a_\K}(\vec a_{\N \backslash \K}|\vec x_{\N \backslash \K})
	\end{equation}
	where the sum runs over all nonempty strict subsets $\K$ of $\N$, with $q_\K \ge 0$ for each $\K$, $\sum_{\K} q_\K = 1$, and where (for each $\K$) $P_\K(\vec a_\K|\vec x_\K)$ is a valid probability distribution for the parties in $\K$ and (for each $\K, \vec x_\K,\vec a_\K$) $P_{\K,\vec x_\K,\vec a_\K}(\vec a_{\N \backslash \K}|\vec x_{\N \backslash \K})$ is a valid probability distribution for the remaining $N{-}|\K|$ parties.
\end{definition}

For ${N=2}$, the above definition reduces to the standard definition of bipartite causal correlations~\cite{oreshkov12}, which is equivalent to Definition~\ref{def:fully_causal} above. 
In the general multipartite case, it can be understood in the following way: each individual summand $P_\K(\vec a_\K|\vec x_\K) \, P_{\K,\vec x_\K,\vec a_\K}(\vec a_{\N \backslash \K}|\vec x_{\N \backslash \K})$ for each bipartition $\{ \K, \N \backslash \K \}$ describes correlations compatible with all the parties in $\K$ acting before all the parties in $\N \backslash \K$, since the choice of inputs for the parties in $\N \backslash \K$ does not affect the outputs for the parties in $\K$. 
The convex combination in Eq.~\eqref{eqdef:2_causal} then takes into account the possibility that the subset $\K$ acting first can be chosen randomly.%
\footnote{One can easily see that it is indeed sufficient to consider just one term per bipartition $\{ \K, \N \backslash \K \}$ in the sum~\eqref{eqdef:2_causal}. 
That is, for some given $\K$,  some correlations $P'(\vec a|\vec x) = P'_\K(\vec a_\K|\vec x_\K) \, P'_{\K,\vec x_\K,\vec a_\K}(\vec a_{\N \backslash \K}|\vec x_{\N \backslash \K})$ and $P''(\vec a|\vec x) = P''_\K(\vec a_\K|\vec x_\K) \, P''_{\K,\vec x_\K,\vec a_\K}(\vec a_{\N \backslash \K}|\vec x_{\N \backslash \K})$, and some weights $q', q'' \ge 0$ with $q' + q'' = 1$, the convex mixture $P(\vec a|\vec x) = q' P'(\vec a|\vec x) + q'' P''(\vec a|\vec x)$ is also of the same form $P(\vec a|\vec x) = P_\K(\vec a_\K|\vec x_\K) \, P_{\K,\vec x_\K,\vec a_\K}(\vec a_{\N \backslash \K}|\vec x_{\N \backslash \K})$ (with $P_\K(\vec a_\K|\vec x_\K) = q' P'_\K(\vec a_\K|\vec x_\K) + q'' P''_\K(\vec a_\K|\vec x_\K)$ and $P_{\K,\vec x_\K,\vec a_\K} = P(\vec a|\vec x) / P_\K(\vec a_\K|\vec x_\K)$). 
This already implies, in particular, that 2-causal correlations form a convex set.}

For correlations that are not 2-causal, we introduce the following terminology:
\begin{definition}[Genuinely $N$-partite noncausal correlations] \label{def:genuinely_N_partite_noncausal}
	An $N$-partite correlation that is not 2-causal is said to be genuinely $N$-partite noncausal.
\end{definition}

\noindent 
Thus, genuinely $N$-partite noncausal correlations are those for which it is impossible to find any definite causal relation between any two (complementary) subsets of parties, even when taking into consideration the possibility that the subset acting first may be chosen probabilistically.

\subsection{Characterisation of the set of 2-causal correlations as a convex polytope}

As shown in Ref.~\cite{branciard16} for the bipartite case and in Refs.~\cite{oreshkov16,abbott16} for the general $N$-partite case, any fully causal correlation can be written as a convex combination of deterministic fully causal correlations. 
As the number of such deterministic fully causal correlations is finite (for finite alphabets of inputs and outputs), they correspond to the extremal points of a convex polytope---the \emph{(fully) causal polytope}. 
The facets of this polytope are given by linear inequalities, which define so-called \emph{(fully) causal inequalities}.

As it turns out, the set of $2$-causal correlations can be characterised as a convex polytope in the same way:

\begin{theorem}
\label{thm:2_causal_polytope}
The set of 2-causal correlations forms a convex polytope, whose (finitely many) extremal points correspond to deterministic 2-causal correlations.
\end{theorem}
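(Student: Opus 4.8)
The plan is to show that the set of 2-causal correlations is the convex hull of a finite collection of deterministic 2-causal correlations; since the convex hull of finitely many points is a polytope, this settles the claim, and identifying these points as the extremal ones follows immediately. The natural first step is to reduce to a single bipartition. For a fixed nonempty strict subset $\K$, call a correlation \emph{compatible with} $\K \prec \N\backslash\K$ if it has the product form $P_\K(\vec a_\K|\vec x_\K)\,P_{\K,\vec x_\K,\vec a_\K}(\vec a_{\N\backslash\K}|\vec x_{\N\backslash\K})$ of Definition~\ref{def:2_causal}, and let $\mathcal{C}_\K$ denote the set of all such correlations. By Definition~\ref{def:2_causal}, the 2-causal set is exactly the convex hull of $\bigcup_{\emptyset\subsetneq\K\subsetneq\N}\mathcal{C}_\K$, so it suffices to prove that each $\mathcal{C}_\K$ is a polytope whose vertices are deterministic.

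Second, I would show that each $\mathcal{C}_\K$ is generated by deterministic correlations compatible with the corresponding order, via an explicit two-stage decomposition. Starting from a correlation $P_\K(\vec a_\K|\vec x_\K)\,P_{\K,\vec x_\K,\vec a_\K}(\vec a_{\N\backslash\K}|\vec x_{\N\backslash\K})$ in $\mathcal{C}_\K$, I would first decompose the single-group distribution $P_\K(\vec a_\K|\vec x_\K)$ as a convex combination of deterministic response functions $f:\vec x_\K \mapsto \vec a_\K$, which is possible because the set of distributions for the parties in $\K$ is a product of simplices whose vertices are precisely such functions. In each resulting term the output $\vec a_\K = f(\vec x_\K)$ is pinned down by $\vec x_\K$, so the second factor $P_{\K,\vec x_\K,\vec a_\K}(\vec a_{\N\backslash\K}|\vec x_{\N\backslash\K})$ reduces to a distribution depending on $\vec x$ alone and can itself be decomposed into deterministic response functions $g:\vec x \mapsto \vec a_{\N\backslash\K}$. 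Collecting terms, the original correlation is expressed as a convex combination of deterministic correlations in which $\vec a_\K$ depends only on $\vec x_\K$ while $\vec a_{\N\backslash\K}$ may depend on all of $\vec x$, i.e.\ deterministic correlations compatible with $\K \prec \N\backslash\K$.

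Third, I would assemble the global statement. Because the input and output alphabets are finite, there are only finitely many deterministic correlations, hence finitely many generators across all bipartitions; the convex hull of their union is the convex hull of a single finite set, which is a polytope, and by the first step it coincides with the 2-causal set. For the extremal points, I would note that every deterministic correlation is a vertex of the ambient polytope of all valid distributions $P(\vec a|\vec x)$ (a product of output-simplices), and a vertex of the ambient polytope that lies in a convex subset remains extremal there; conversely, the extremal points of a convex hull of a finite set belong to that set. Hence the extremal points of the 2-causal polytope are exactly the deterministic 2-causal correlations.

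The step I expect to be the main obstacle is the second one. The subtlety is that one cannot merely intersect the full probability polytope with the linear ``no-signalling-to-the-past'' subspace and assert that the vertices remain deterministic, since such an intersection can in principle introduce new, non-deterministic vertices. The constructive decomposition circumvents this, and its delicate point is the \emph{order} of operations: the second group's conditional distribution genuinely depends on both $\vec x_\K$ and $\vec a_\K$, and it is only after the first group has been rendered deterministic---so that $\vec a_\K$ is fixed to $f(\vec x_\K)$---that this conditional becomes a distribution in $\vec x$ alone, admitting a clean deterministic decomposition with input-independent weights.
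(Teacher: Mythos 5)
Your proof is correct and follows essentially the same route as the paper's: fix a bipartition $\{\K,\N\backslash\K\}$, observe that the corresponding product-form correlations are exactly the effectively bipartite correlations compatible with $\K \prec \N\backslash\K$ and form a polytope with deterministic vertices, and then take the convex hull over all bipartitions. The only difference is that the paper delegates the bipartite step to a citation of the known result for bipartite causal correlations, whereas you prove it directly via the two-stage deterministic decomposition (first the group $\K$, then, with $\vec a_\K$ pinned to $f(\vec x_\K)$, the group $\N\backslash\K$), which is a valid and self-contained way to establish the same lemma.
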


\begin{proof}
For a given nonempty strict subset $\K$ of $\N$, $P_\K(\vec a_\K|\vec x_\K) \, P_{\K,\vec x_\K,\vec a_\K}(\vec a_{\N \backslash \K}|\vec x_{\N \backslash \K})$ defines an ``effectively bipartite'' correlation, that is, a bipartite correlation between an effective party $\K$ with input $\vec x_\K$ and output $\vec a_\K$ and an effective party $\N \backslash \K$ with input $\vec x_{\N \backslash \K}$ and output $\vec a_{\N \backslash \K}$, which are formed by grouping together all parties in the respective subsets. 
That effectively bipartite correlation is compatible with the causal order%
\footnote{The notation $\K_1 \prec \K_2$ (or simply $A_{k_1} \prec A_{k_2}$ for singletons $\K_j = \{A_{k_j}\}$), already used in the introduction, formally means that the correlation under consideration satisfies $P(\vec a_{\K_1}|\vec x) = P(\vec a_{\K_1}|\vec x_{\N \backslash \K_2})$.  
It will also be extended to more subsets, with $\K_1 \prec \K_2 \prec \cdots \prec \K_m$ meaning that $P(\vec a_{\K_1\cup\cdots\cup\K_j}|\vec x) = P(\vec a_{\K_1\cup\cdots\cup\K_j}|\vec x_{\N \backslash (\K_{j+1}\cup\cdots\cup\K_m)})$ for all $j = 1, \ldots, m-1$.} 
$\K \prec \N \backslash \K$. 
As mentioned above, the set of such correlations forms a convex polytope whose extremal points are deterministic, effectively bipartite causal correlations~\cite{branciard16}---which, according to Definition~\ref{def:2_causal}, define deterministic 2-causal $N$-partite correlations.

Eq.~\eqref{def:2_causal} then implies that the set of 2-causal correlations is the convex hull of all such polytopes for each nonempty strict subset $\K$ of $\N$; it is thus itself a convex polytope, whose extremal points are indeed deterministic 2-causal correlations.
\end{proof}

\begin{figure}
	\begin{center}
	\includegraphics[width=.95\columnwidth]{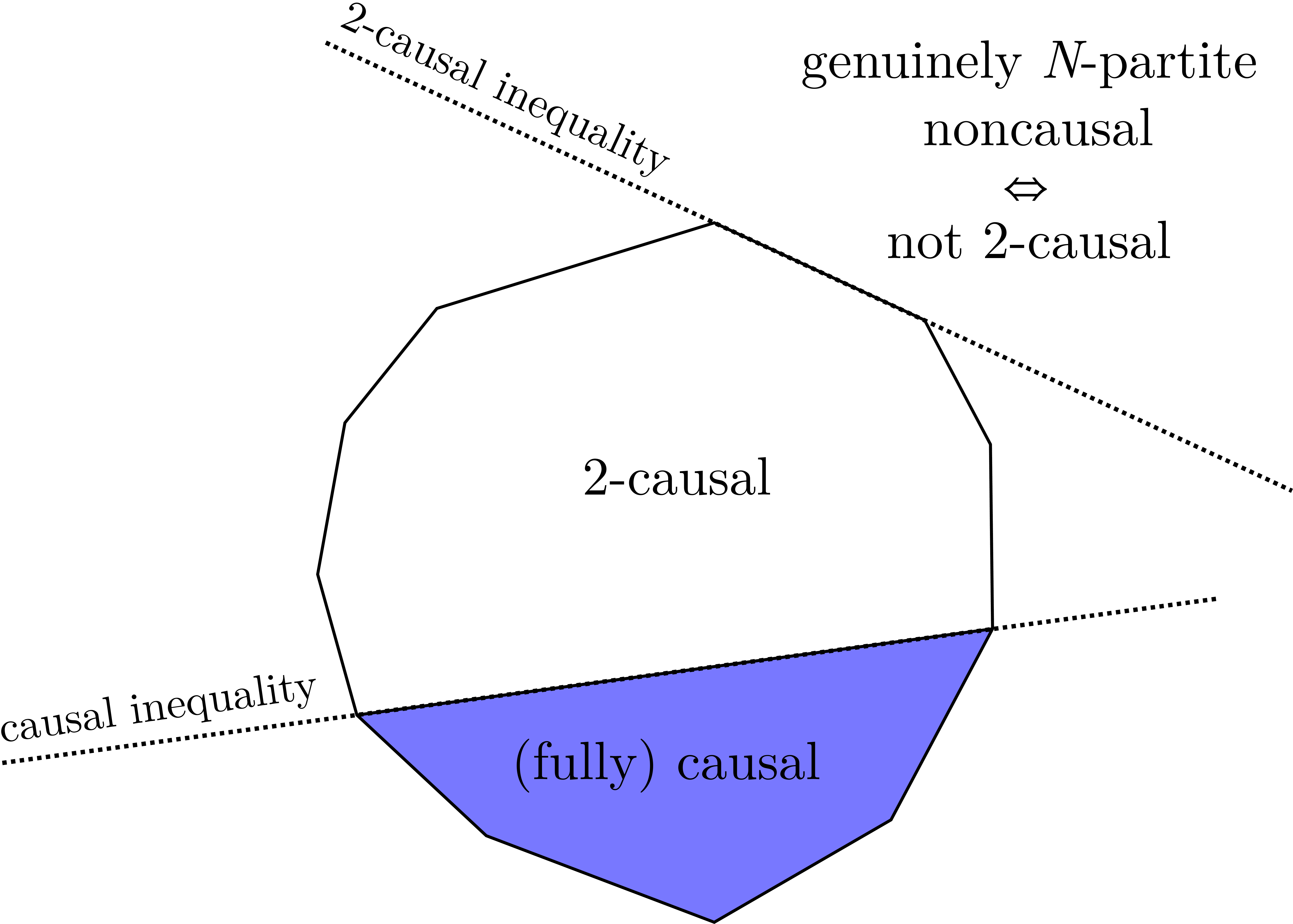}
	\end{center}
	\caption{Sketch of the fully causal and 2-causal polytopes (the shading should be interpreted as indicating that the latter contains the former). The vertices of the polytopes correspond to deterministic fully causal and 2-causal correlations, and their facets correspond to causal and 2-causal inequalities, respectively. Correlations that are outside of the fully causal polytope are simply noncausal; correlations that are outside of the 2-causal polytope are genuinely $N$-partite noncausal.}
	\label{fig:2polytopes}
\end{figure}

As any fully causal correlation is 2-causal, but not vice versa, the fully causal polytope is a strict subset of what we shall call the \emph{2-causal polytope} (see Fig.~\ref{fig:2polytopes}). 
Every vertex of the 2-causal polytope corresponds to a deterministic function $\vec \alpha$ that assigns a list of outputs $\vec a = \vec \alpha(\vec x)$ to the list of inputs $\vec x$, such that the corresponding probability distribution $P_{\vec \alpha}^\text{det}(\vec a|\vec x) = \delta_{\vec a, \vec \alpha(\vec x)}$ is 2-causal, and thus satisfies Eq.~\eqref{eqdef:2_causal}.
Since $P_{\vec \alpha}^\text{det}(\vec a|\vec x)$ can only take values $0$ or $1$, there is only one term in the sum in Eq.~\eqref{eqdef:2_causal}, and it can be written such that there is a single (nonempty) strict subset $\K$ that acts first. 
That is, $\vec \alpha$ is such that the outputs $\vec a_\K$ of the parties in $\K$ are determined exclusively by their inputs $\vec x_\K$, while the outputs $\vec a_{\N \backslash \K}$ of the remaining parties are determined by all inputs $\vec x$. 
The facets of the 2-causal polytope are linear inequalities that are satisfied by all 2-causal correlations; we shall call these \textit{2-causal inequalities} (see Fig.~\ref{fig:2polytopes}).

\section{Analysis of the tripartite ``lazy scenario''}
\label{sec:lazy_scenario}

In this section we analyse in detail, as an illustration, the polytope of 2-causal correlations for the simplest nontrivial scenario with more than two parties.
In Ref.~\cite{abbott16} it was shown that this scenario is the so-called tripartite ``lazy scenario'', in which each party $A_k$ receives a binary input $x_k$, has a single constant output for one of the inputs, and a binary output for the other.
By convention we consider that for each $k$, on input $x_k=0$ the output is always $a_k=0$, while for $x_k=1$ we take $a_k\in\{0,1\}$.
The set of fully causal correlations was completely characterised for this scenario in Ref.~\cite{abbott16}, which will furthermore permit us to compare the noncausal and genuinely tripartite noncausal correlations in this concrete example.

As is standard (and as we did in the introduction), we will denote here the three parties by $A$, $B$, $C$, their inputs $x$, $y$, $z$, and their outputs $a$, $b$ and $c$.
Furthermore, we will denote the complete tripartite probability distribution by $P_{ABC}$ [i.e., $P_{ABC}(abc|xyz)\coloneqq P(abc|xyz)$] and the marginal distributions for the indicated parties by $P_{AB}$, $P_{A}$, etc. [e.g., $P_{AB}(ab|xyz)=\sum_c P_{ABC}(abc|xyz)$].

\subsection{Characterisation of the polytope of 2-causal correlations}

\subsubsection{Complete characterisation}

We characterise the polytope of 2-causal correlations in much the same way as the polytope of fully causal correlations was characterised in Ref.~\cite{abbott16}, where we refer the reader for a more in-depth presentation.
Specifically, the vertices of the polytope are found by enumerating all deterministic 2-causal probability distributions $P_{ABC}$, i.e., those which admit a decomposition of the form~\eqref{eqdef:2_causal} with (because they are deterministic) a single term in the sum (corresponding to a single group of parties acting first).
One finds that there are $1\,520$ such distributions, and thus vertices.

In order to determine the facets of the polytope, which in turn correspond to tight 2-causal inequalities, a parametrisation of the 19-dimensional polytope must be fixed and the convex hull problem solved.
We use the same parametrisation as in Ref.~\cite{abbott16}, and again use \textsc{cdd}~\cite{cdd} to compute the facets of the polytope.
We find that the polytope has $21\,154$ facets, each corresponding to a 2-causal inequality, the violation of which would certify genuinely tripartite noncausality.
Many inequalities, however, can be obtained from others by either relabelling outputs or permuting parties, and as a result it is natural to group the inequalities into equivalence classes, or ``families'', of inequalities.
Taking this into account, we find that there are 476 families of facet-inducing 2-causal inequalities, 3 of which are trivial, as they simply correspond to positivity constraints on the probabilities (and are thus satisfied by any valid probability distribution).
While the 2-causal inequalities all detect genuinely $N$-partite noncausality, it is interesting to note that all except 22 of them can be saturated by fully causal correlations (and all but 37 even by correlations compatible with a fixed causal order).

We provide the complete list of these inequalities, organised by their symmetries and the types of distribution required to saturate them, in the Supplementary Material~\cite{SM}, and will analyse in more detail a few particularly interesting examples in what follows.
First, however, it is interesting to note that only 2 of the 473 nontrivial facets are also facets of the (fully) causal polytope for this scenario (one of which is Eq.~\eqref{eq:ineq3} analysed below), and hence the vast majority of facet-inducing inequalities of the causal polytope do not single out genuinely tripartite noncausal correlations.
Moreover, none of the 2-causal inequalities we obtain here differ from facet-inducing fully causal inequalities only in their bound, and, except for the aforementioned cases, our 2-causal inequalities thus represent novel inequalities.

\subsubsection{Three interesting inequalities}

Of the nontrivial 2-causal inequalities, those that display certain symmetries between the parties are particularly interesting since they tend to have comparatively simple forms and often permit natural interpretations (e.g., as causal games~\cite{oreshkov12,branciard16}).

For example, three nontrivial families of 2-causal inequalities have forms (i.e., certain versions of the inequality within the corresponding equivalence class) that are completely symmetric under permutations of the parties.
One of these is the inequality
\begin{align}\label{eq:ineq1}
 I_1 &= \big[ P_A(1|100) + P_B(1|010) + P_C(1|001) \big] \notag \\
&+ \big[ P_{AB}(11|110) + P_{BC}(11|011) + P_{AC}(11|101) \big] \notag \\
&- P_{ABC}(111|111)  \ge 0,
\end{align}
which can be naturally expressed as a causal game.  Indeed, it can be rewritten as
\begin{equation}\label{eq:ineq1game}
P\big( \tilde{a}\tilde{b}\tilde{c} = x y z \big) \ \le \ 3/4 \,,
\end{equation}
where $\tilde{a}=1$ if $x=0$, $\tilde{a}=a$ if $x=1$ 
(i.e., $\tilde{a}=xa\oplus x \oplus 1$, where $\oplus$ denotes addition modulo 2), 
and similarly for $\tilde b$ and $\tilde c$, and where it is implicitly assumed that all inputs occur with the same probability.
This can be interpreted as a game in which the goal is to collaborate such that the product of the nontrivial outputs (i.e., those corresponding to an input 1) is equal to the product of the inputs, and where the former product is taken to be 1 if all inputs are 0 and there are therefore no nontrivial outputs (in which case the game will always be lost).
The probability of success for this game can be no greater than $3/4$ if the parties share a 2-causal correlation.
This bound can easily be saturated by a deterministic, even fully causal, distribution: if every party always outputs 0 then the parties will win the game in all cases, except when the inputs are all 0 or all 1.

Another party-permutation-symmetric 2-causal inequality is the following:
\begin{align}\label{eq:ineq2}
I_2 = 1 + 2 \big[ P_A(1|100) + P_B(1|010) + P_C(1|001) \big] \notag \\
- \big[ P_{AB}(11|110) + P_{BC}(11|011) + P_{AC}(11|101) \big] & \ge 0,
\end{align}
whose interpretation can be made clearer by rewriting it as
\begin{align}\label{eq:3wayLGYNI}
	P_A(1|100) + P_B(1|010) - P_{AB}(11|110) & \notag\\
	+\, P_B(1|010) + P_C(1|001) - P_{BC}(11|011) & \notag\\
	+\, P_A(1|100) + P_C(1|001) - P_{AC}(11|101) & \ge -1.
\end{align}
The left-hand side of this inequality is simply the sum of three terms corresponding to conditional ``lazy guess your neighbour's input'' (\mbox{LGYNI}) inequalities~\cite{abbott16}, one for each pair of parties (conditioned on the remaining party having input $0$), while the negative bound on the right-hand side accounts for the fact that any pair of parties that are grouped together in a bipartition may maximally violate the \mbox{LGYNI} inequality between them (and thus reach the minimum algebraic bound $-1$).
This inequality can be interpreted as a ``scored game'' (as opposed to a ``win-or-lose game'') in which each pair of parties scores one point if they win their respective bipartite \mbox{LGYNI} game and the third party's input is 0, and where the goal of the game is to maximise the total score, given by the sum of all three pairs' individual scores. 
The best average score (when the inputs are uniformly distributed) for a 2-causal correlation is $5/4$, corresponding to the 2-causal bounds of $0$ in Eq.~\eqref{eq:ineq2} and $-1$ in Eq.~\eqref{eq:3wayLGYNI}.%
\footnote{The bound of these inequalities, and the best average score of the corresponding game, can be reached by a 2-causal strategy in which one party, say $A$, has a fixed causal order with respect to the other two parties grouped together, who share a correlation maximally violating the corresponding \mbox{LGYNI} inequality. For example, the distribution $P(abc|xyz)=\delta_{a,0} \, \delta_{b,yz} \, \delta_{c,yz}$, where $\delta$ is the Kronecker delta function, is compatible with the order $A \prec \{B,C\}$ (or with $\{B,C\} \prec A$) and saturates Eqs.~\eqref{eq:ineq2} and~\eqref{eq:3wayLGYNI}.} 
It is also clear from the form of Eq.~\eqref{eq:3wayLGYNI} that for fully causal correlations the left-hand side is lower-bounded by $0$. 
This inequality is thus amongst the 22 facet-inducing 2-causal inequalities that cannot be saturated by fully causal distributions.

In addition to the inequalities that are symmetric under any permutation of the parties, there are four further nontrivial families containing 2-causal inequalities which are symmetric under cyclic exchanges of parties.
One interesting such example is the following:
\begin{align}
I_3 &= 2 + [ P_A(1|100) + P_B(1|010) + P_C(1|001) \big]  \notag \\
&- \big[ P_A(1|101) + P_B(1|110) + P_C(1|011) \big]  \ge 0. \label{eq:ineq3}
\end{align}
This inequality can again be interpreted as a causal game in the form (where we again implicitly assume a uniform distribution of inputs for all parties)
\begin{align}
& P\big(x(y \oplus 1)(a \oplus z)=y(z \oplus 1)(b \oplus x) \notag \\
& \hspace{20mm} =z(x \oplus 1)(c \oplus y)=0\big) \ \le \ 7/8 \,, \label{eq:ineq3game}
\end{align}
where the goal of the game is for each party, whenever they receive the input 1 and their right-hand neighbour has the input 0, to output the input of their left-hand neighbour (with $C$ being considered, in a circular manner, to be to the left of $A$).%
\footnote{The bound of $7/8$ on the probability of success can, for instance, be reached by the fully causal (and hence 2-causal) distribution $P(abc|xyz)=\delta_{a,x} \, \delta_{b,xy} \, \delta_{c,yz}$, compatible with the order $A\prec B \prec C$, which wins the game in all cases except when $(x,y,z)=(1,0,0)$.} 
This inequality is of additional interest as it is one of the two nontrivial inequalities which is also a facet of the standard causal polytope for this scenario.
(The second such inequality, which lacks the symmetry of this one, is presented in the Supplementary Material~\cite{SM}.)

\subsection{Violations of 2-causal inequalities by process matrix correlations}

One of the major sources of interest in causal inequalities has been the potential to violate them in more general frameworks, in which causal restrictions are weakened.
There has been a particular interest in one such model, the \emph{process matrix formalism}, in which quantum mechanics is considered to hold locally for each party, but no global causal order between the parties is assumed~\cite{oreshkov12}.
In this framework, the (possibly noncausal) interactions between the parties are described by a process matrix $W$, which, along with a description of the operations performed by the parties, allows the correlations $P(\vec{a}|\vec{x})$ to be calculated.

It is well-known that process matrix correlations can violate causal inequalities~\cite{abbott16,baumeler13,baumeler14,branciard16,oreshkov12}, although the physical realisability of such processes remains an open question~\cite{araujo17,feix16}.
In Ref.~\cite{abbott16} it was shown that all the nontrivial fully causal inequalities for the tripartite lazy scenario can be violated by process matrices.
However, for most inequalities violation was found to be possible using process matrices $W^{\{A,B\}\prec C}$ that are compatible with $C$ acting last, which means the correlations they produced were necessarily 2-causal.
It is therefore interesting to see whether process matrices are capable of violating 2-causal inequalities in general, and thus of exhibiting genuinely $N$-partite noncausality.
We will not present the process matrix formalism here, and instead simply summarise our findings;
we refer the reader to Refs.~\cite{araujo15,oreshkov12} for further details on the technical formalism.

Following the same approach as in Refs.~\cite{branciard16,abbott16} we looked for violations of the 2-causal inequalities.
Specifically, we focused on two-dimensional (qubit) systems and applied the same ``see-saw'' algorithm to iteratively perform semidefinite convex optimisation over the process matrix and the instruments defining the operations of the parties.

As a result, we were able to find process matrices violating all but 2 of the 473 nontrivial families of tight 2-causal inequalities (including Eqs.~\eqref{eq:ineq1} and~\eqref{eq:ineq3} above) using qubits, and in all cases where a violation was found, the best violation was given by the same instruments that provided similar results in Ref.~\cite{abbott16}.
We similarly found that 284 families of these 2-causal inequalities (including Eq.~\eqref{eq:ineq3}) could be violated by completely classical process matrices,\footnote{Incidentally, exactly the same number of families of fully causal inequalities were found to be violable with classical process matrices in Ref.~\cite{abbott16}. It remains unclear whether this is merely a coincidence or the result of a deeper connection.} a phenomenon that is not present in the bipartite scenario where classical processes are necessarily causal~\cite{oreshkov12}.

While the violation of 2-causal inequalities is again rather ubiquitous, the existence of two inequalities for which we found no violation is curious.
One of these inequalities is precisely Eq.~\eqref{eq:ineq2}, and its decomposition in Eq.~\eqref{eq:3wayLGYNI} into three \mbox{LGYNI} inequalities helps provide an explanation.
In particular, the seemingly best possible violation of a (conditional) \mbox{LGYNI} inequality using qubits is approximately $0.2776$~\cite{abbott16,branciard16}, whereas it is clear that a process matrix violating Eq.~\eqref{eq:3wayLGYNI} must necessarily violate a conditional \mbox{LGYNI} inequality between one pair of parties by at least $1/3$.
Moreover, in Ref.~\cite{branciard16} it was reported that no better violation was found using three- or four-dimensional systems, indicating that Eq.~\eqref{eq:3wayLGYNI} can similarly not be violated by such systems.
It nonetheless remains unproven whether such a violation is indeed impossible, and the convex optimisation problem for three parties quickly becomes intractable for higher dimensional systems, making further numerical investigation difficult.
The second inequality for which no violation was found can similarly be expressed as a sum of three different forms (i.e., relabellings) of a conditional \mbox{LGYNI} inequality, and a similar argument thus explains why no violation was found.
Recall that, as they can be expressed as a sum of three conditional \mbox{LGYNI} inequalities with a negative 2-causal bound, these two 2-causal inequalities cannot be saturated by fully causal distributions; it is interesting that the remaining inequalities that require noncausal but 2-causal distributions to saturate can nonetheless be violated by process matrix correlations.

\subsection{Generalised 2-causal inequalities for $N$ parties}

Although it quickly becomes intractable to completely characterise the 2-causal polytope for more complicated scenarios with more parties, inputs and/or outputs, as is also the case for fully causal correlations, it is nonetheless possible to generalise some of the 2-causal inequalities into inequalities that are valid for any number of parties $N$.

The inequality~\eqref{eq:ineq1}, for example, can naturally be generalised to give a 2-causal inequality valid for all $N \ge 2$.%
\footnote{We continue to focus on the lazy scenario defined earlier for concreteness, but we note that the proofs of the generalised inequalities~\eqref{eq:genIneq1} and~\eqref{eq:genIneq2} in fact hold in any nontrivial scenario, of which the lazy one is the simplest example. The bounds for the corresponding causal games and whether or not the inequalities define facets will, however, generally depend on the scenario considered.\label{footnote:genScenarios}}
Specifically, one obtains
\begin{align}\label{eq:genIneq1}
	J_1(N)=& \hspace{-2mm} \sum_{\emptyset \subsetneq \K \subsetneq \N}\hspace{-2mm} P\big(\vec{a}_{\K}=\vec{1}\,|\,\vec{x}_{\K}=\vec{1},\vec{x}_{\N\backslash \K}=\vec{0}\big) \qquad \notag \\[-2mm]
	& \hspace{22mm} - P\big(\vec{a}=\vec{1}\,|\,\vec{x}=\vec{1}\big)  \ge 0
\end{align}
where $\vec{1}=(1,\dots,1)$ and $\vec{0}=(0,\dots,0)$,
which can be written analogously to Eq.~\eqref{eq:ineq1game} as a game (again implicitly defined with uniform inputs) of the form
\begin{equation}
	P(\Pi_k \tilde{a}_k = \Pi_k x_k)\le 1-2^{-N+1}.
\end{equation}
We leave the proof of this inequality and its 2-causal bound to Appendix~\ref{appndx:genIneq}.
It is interesting to ask if this inequality is tight (i.e., facet inducing) for all $N$. For $N=2$ it reduces to the \mbox{LGYNI} inequality which is indeed tight, and for $N=3$ it was also found to be a facet.
By explicitly enumerating the vertices of the 2-causal polytope for $N=4$ (of which there are $136\,818\,592$) we were able to verify that $J_1(4)\ge 0$ is indeed also a facet, and we conjecture that this is true for all $N$.
Note that, as for the tripartite case it is trivial to saturate the inequality for all $N$ by considering the (fully causal) strategy where each party always outputs 0.

It is also possible to generalise inequality~\eqref{eq:3wayLGYNI} to $N$ parties---which will prove more interesting later---by considering a scored game in which every pair of parties gets one point if they win their respective bipartite \mbox{LGYNI} game and all other parties' inputs are 0, and the goal of the game is to maximise the total score of all pairs.
If two parties belong to the same subset in a bipartition, then they can win their respective \mbox{LGYNI} game perfectly, whereas they are limited by the causal bound $0$ if they belong to two different groups.
The 2-causal bound on the inequality is thus given by the maximum number of pairs of parties that belong to a common subset over all bipartitions, times the maximal violation of the bipartite \mbox{LGYNI} inequality.
Specifically, we obtain the 2-causal inequality
\begin{equation}\label{eq:genIneq2}
	J_2(N)=\hspace{-2mm}\sum_{\{i,j\}\subset \N} L_N(i,j)\ge -\binom{N-1}{2}
\end{equation}
where  $\binom{n}{2}=n(n{-}1)/2$ is a binomial coefficient and
\begin{align}
	 \hspace{-1mm} L_N(i,j) = \, & P\big(a_i=1|x_ix_j=10,\vec{x}_{\N\backslash \{i,j\}}=\vec{0}\big)\notag\\ 
		& \!\! + P\big(a_j=1|x_ix_j=01,\vec{x}_{\N\backslash \{i,j\}}=\vec{0}\big)\notag\\ \label{bilgynies}
		& \!\! - P\big(a_ia_j=11|x_ix_j=11,\vec{x}_{\N\backslash \{i,j\}}=\vec{0}\big).
\end{align}
Each term $L_N(i,j)$ defines a bipartite conditional \mbox{LGYNI} inequality with the causal bound $L_N(i,j)\ge 0$, and the minimum algebraic bound (i.e.\ the maximal violation) $-1$. The minimum algebraic bound of $J_2(N)$ is thus $-\binom{N}{2}$.
The validity of inequality~\eqref{eq:genIneq2} for 2-causal correlations (which corresponds to a maximal average score of $(2N{-}1)(N{-}1)/2^N$---compared to the maximal algebraic value of $2N(N{-}1)/2^N$---for the corresponding game with uniform inputs) is again formally proved in Appendix~\ref{appndx:genIneq}.

We note that in contrast to Eq.~\eqref{eq:genIneq1}, $J_2(4)\ge -3$ is not a facet of the 4-partite 2-causal polytope, and thus the inequality is not tight in general.
Inequality~\eqref{eq:genIneq2} can nonetheless be saturated by 2-causal correlations for any $N$.
For example, consider $\K=\{1,\dots,N-1\}$ and take the distribution
\begin{equation}\label{eq:genIneqViolatingDist}
	P(\vec{a}|\vec{x})= \delta_{\vec{a}_\K,f(\vec{x}_\K)} \, \delta_{a_N,0}
\end{equation}
with $f(\vec{x}_\K) = \vec{x}_\K$ if $\vec{x}_\K$ contains exactly two inputs 1, and $f(\vec{x}_\K) = \vec{0}$ otherwise.
$P(\vec{a}|\vec{x})$ is clearly 2-causal since it is compatible with the causal order $\K \prec \N \backslash \K$ (indeed, also with $\N \backslash \K \prec \K$).
One can then easily verify that $P(\vec{a}|\vec{x})$ saturates~\eqref{eq:genIneq2}, since all $\binom{N-1}{2}$ pairs of parties in $\K$ can win their respective conditional \mbox{LGYNI} game perfectly, and therefore contribute with a term of $-1$ to the sum in Eq.~\eqref{eq:genIneq2}.

\section{Refining the definition of genuinely multipartite noncausal correlations}
\label{sec:refining}

So far we only discussed correlations that can or cannot arise given a definite causal order between two subsets of parties. 
It makes sense to consider more refined definitions that discriminate, among noncausal correlations, to what extent and in which way they represent a genuinely multipartite resource. 
The idea will again be to see if a given correlation can be established by letting certain groups of parties act ``as one'', while retaining a definite causal order between different groups. 
The number and size of the groups for which this is possible can be used to give two distinct characterisations of how genuinely multipartite the observed noncausality is.

\subsection{$\Ptn$-causal correlations}
\label{subsec:Pcausal}

We first want to characterise the correlations that can be realised when a definite causal order exists between certain groups of parties, while no constraint is imposed on the correlations within each group.

\begin{figure}
	\begin{center}
		 \includegraphics[width=.95\columnwidth]{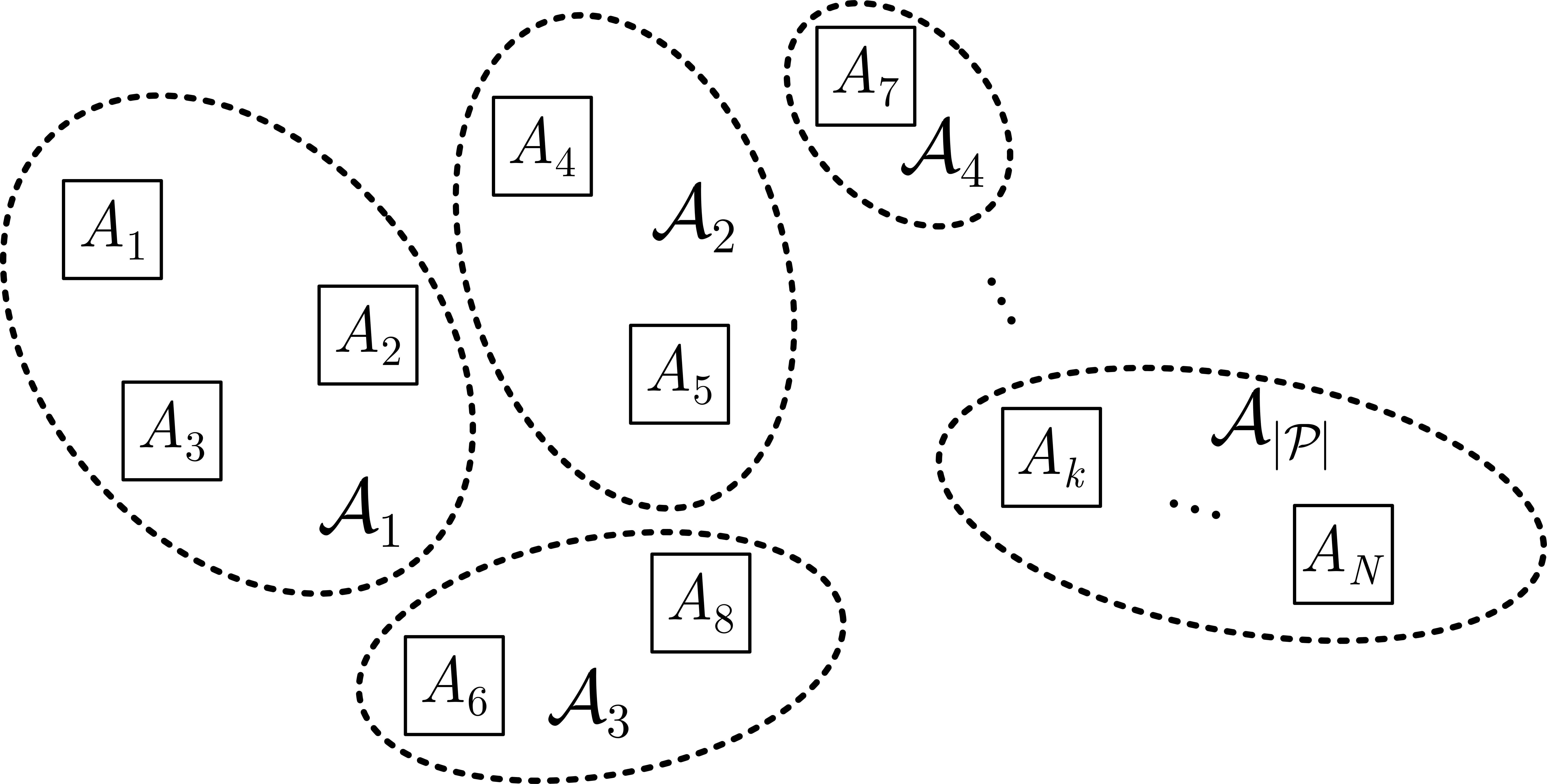}
	\end{center}
	\caption{The $N$ parties $A_k$ are grouped into $|\Ptn|$ disjoint subsets $\A_\ell$, according to the partition $\Ptn = \{ \A_1, \ldots, \A_{|\Ptn|} \}$. The parties in a subset $\A_\ell$ act ``as one'', thus defining an ``effective party''. To define $\Ptn$-causal correlations, the $N$-partite correlation $P(a_1,\ldots,a_N|x_1,\ldots,x_N)$ is considered as an effectively $|\Ptn|$-partite correlation $P({\vec a}_{\A_1},\ldots,{\vec a}_{\A_{|\Ptn|}}|{\vec x}_{\A_1},\ldots,{\vec x}_{\A_{|\Ptn|}})$.}
	\label{fig:partition}
\end{figure}

Let us consider for this purpose a partition $\Ptn = \{ \A_1, \ldots, \A_{|\Ptn|} \}$ of $\N$---i.e., a set of $|\Ptn|$ nonempty disjoint subsets $\A_\ell$ of $\N$, such that $\cup_\ell \A_\ell = \N$, see Fig.~\ref{fig:partition}. 
Note that if $\Ptn$ contains at least two subsets, then for a given subset $\A_\ell\subset \N$, $\Ptn \backslash \{\A_\ell\}$ also represents a partition of $\N \backslash \A_\ell$.
Let us then introduce the following definition:

\begin{definition}[$\Ptn$-causal correlations] \label{def:P_causal}
For a given partition $\Ptn$ of $\N$, an $N$-partite correlation $P$ is said to be $\Ptn$-causal if and only if $P$ is causal when considered as an effective $|\Ptn|$-partite correlation, where each subset in $\Ptn$ defines an effective party.

More precisely, analogously to Definition~\ref{def:fully_causal}:
\begin{itemize}
\item For $|\Ptn| = 1$, any $N$-partite correlation $P$ is $\Ptn$-causal;
\item For $|\Ptn| \ge 2$, an $N$-partite correlation $P$ is $\Ptn$-causal if and only if it can be decomposed in the form
	\begin{align}
		P(\vec a|\vec x) =& \sum_{{\A_\ell} \in \Ptn} q_{\A_\ell} \, P_{\A_\ell}(\vec a_{\A_\ell}|\vec x_{\A_\ell}) \hspace{30mm} \notag \\
	&\hspace{2mm}\times P_{\A_\ell,\vec x_{\A_\ell},\vec a_{\A_\ell}}(\vec a_{\N \backslash {\A_\ell}}|\vec x_{\N \backslash {\A_\ell}}) \quad \label{eqdef:P_causal}
	\end{align}
	with $q_{\A_\ell} \ge 0$ for each $\A_\ell$, $\sum_{\A_\ell} q_{\A_\ell} = 1$, where (for each $\A_\ell$) $P_{\A_\ell}(\vec a_{\A_\ell}|\vec x_{\A_\ell})$ is a valid probability distribution for the parties in $\A_\ell$ and (for each $\A_\ell,\vec x_{\A_\ell},\vec a_{\A_\ell}$) $P_{\A_\ell,\vec x_{\A_\ell},\vec a_{\A_\ell}}(\vec a_{\N \backslash {\A_\ell}}|\vec x_{\N \backslash {\A_\ell}})$ is a $(\Ptn \backslash \{\A_\ell\})$-causal correlation for the remaining $N-|\A_\ell|$ parties.
\end{itemize}
\end{definition}

In the extreme case of a single-set partition $\Ptn = \{\N\}$ ($|\Ptn| = 1$), any correlation is by definition trivially $\Ptn$-causal; 
at the other extreme, for a partition of $\N$ into $N$ singletons ($|\Ptn| = N$), the definition  of $\Ptn$-causal correlations above is equivalent to that of fully causal correlations, Definition~\ref{def:fully_causal}~\cite{oreshkov16,abbott16}.
Between these two extreme cases, a $\Ptn$-causal correlation identifies the situation where, with some probability, all parties within one group act before all other parties; 
conditioned on their inputs and outputs, another group acts second (before all remaining parties) with some probability; and so on. 
We emphasise that no constraint is imposed on the correlations that can be generated within each group, since we allow them to share the most general resource conceivable---in particular, there might be no definite causal order between the parties inside a group.

Since the definition of $\Ptn$-causal correlations above matches that of causal correlations for the $|\Ptn|$ effective parties defined by $\Ptn$, all basic properties of causal correlations (see Ref.~\cite{abbott16}) generalise straightforwardly to $\Ptn$-causal correlations. 
Note in particular that the definition captures the idea of dynamical causal order, where the causal order between certain subsets of parties in $\Ptn$ may depend on the inputs and outputs of other subsets of parties that acted before them. 
The following result also follows directly from what is known about causal correlations~\cite{oreshkov16,abbott16}:

\begin{theorem}
\label{thm:P_causal_polytope}
For any given $\Ptn$, the set of $\Ptn$-causal correlations forms a convex polytope, whose (finitely many) extremal points correspond to deterministic $\Ptn$-causal correlations.
\end{theorem}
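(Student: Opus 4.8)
The plan is to exploit the fact, built directly into Definition~\ref{def:P_causal}, that $\Ptn$-causal correlations are nothing but fully causal correlations in a coarse-grained scenario in which the parties have been regrouped according to $\Ptn$. First I would make this correspondence explicit: given the partition $\Ptn = \{\A_1, \ldots, \A_{|\Ptn|}\}$, I would define an effective $|\Ptn|$-partite scenario in which the $\ell$-th effective party has input $\vec x_{\A_\ell}$ (ranging over the product of the finite input alphabets of the parties in $\A_\ell$) and output $\vec a_{\A_\ell}$ (ranging over the corresponding product of output alphabets). Regrouping inputs and outputs in this way furnishes a bijection between $N$-partite distributions $P(\vec a|\vec x)$ and $|\Ptn|$-partite distributions of the effective parties; crucially, this map is affine---indeed it is just a permutation of the coordinates of the probability vector---and therefore preserves the convex-combination and extremality structure.

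Next I would observe that, term by term, the decomposition~\eqref{eqdef:P_causal} is identical to that of Definition~\ref{def:fully_causal} once each subset $\A_\ell$ is read as a single effective party: the group $\A_\ell$ acting first plays the role of the single party $k$, and the recursive $(\Ptn \backslash \{\A_\ell\})$-causal requirement on the remaining parties is exactly the recursive $(N{-}1)$-partite fully causal requirement for the remaining effective parties. Hence, under the bijection above, the set of $\Ptn$-causal $N$-partite correlations is mapped precisely onto the set of fully causal correlations of the effective $|\Ptn|$-partite scenario. Note that, unlike in the proof of Theorem~\ref{thm:2_causal_polytope}, no convex hull over different partitions is taken here, since $\Ptn$ is fixed; the argument reduces to a single causal polytope.

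I would then invoke the known characterisation of the fully causal set: as recalled before Theorem~\ref{thm:2_causal_polytope} and established in Refs.~\cite{oreshkov16,abbott16}, for any scenario with finite input and output alphabets the fully causal correlations form a convex polytope whose extremal points are the finitely many deterministic fully causal correlations. Since the effective alphabets are finite products of finite alphabets, this result applies directly to the effective scenario. Transporting back along the affine bijection, the $\Ptn$-causal set is itself a convex polytope, and its vertices are the images of the deterministic fully causal effective correlations; such a deterministic effective correlation assigns a fixed output vector to each input vector and thus corresponds to a deterministic $N$-partite $\Ptn$-causal correlation, yielding the claimed characterisation.

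I do not expect a genuine obstacle here, since the result is essentially a corollary of the fully causal polytope theorem; the only point requiring care is to verify that the regrouping map is a well-defined affine isomorphism that sends deterministic distributions to deterministic distributions and back, so that both the polytope property and the identification of the vertices transfer faithfully. This is routine once the effective alphabets are set up correctly.
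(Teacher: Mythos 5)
Your proposal is correct and follows essentially the same route as the paper, which likewise notes that Definition~\ref{def:P_causal} coincides with the definition of fully causal correlations for the $|\Ptn|$ effective parties obtained by regrouping, so that the polytope structure of causal correlations established in Refs.~\cite{oreshkov16,abbott16} carries over directly. Your explicit verification that the regrouping map is an affine bijection preserving determinism is a welcome elaboration of a step the paper leaves implicit, but it is the same argument.
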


We shall call this polytope the \emph{$\Ptn$-causal polytope}; its facets define \emph{$\Ptn$-causal inequalities}.
Theorem~\ref{thm:P_causal_polytope} implies that any $\Ptn$-causal correlation can be obtained as a probabilistic mixture of deterministic $\Ptn$-causal correlations. 
It is useful to note that, similarly to Ref.~\cite{abbott16}, deterministic $\Ptn$-causal correlations can be interpreted in the following way: a set $\A_{t_1}$ of parties acts with certainty before all others, with their outputs being a deterministic function of all inputs in that set but independent of the inputs of any other parties, $\vec a_{{\A_{t_1}}} = \vec \alpha_{{\A_{t_1}}}(\vec x_{ {\A_{t_1}}})$. 
The inputs of the first set also determine which set comes second, $\A_{t^{\vec{x}}_2}$, where $t^{\vec{x}}_2=t_2(\vec x_{ {\A_{t_1}}})$, whose outputs can depend on all inputs of the first and second sets; and so on, until all the sets in the partition are ordered.
As one can see, each possible vector of inputs $\vec{x}$ thus determines (in a not necessarily unique way) a given causal order for the sets of parties in $\Ptn$.

\subsection{Non-inclusion relations for $\Ptn$-causal polytopes}\label{subsec:nonincl_Pcausal}

As suggested earlier, our goal is to quantify the extent to which a noncausal resource is genuinely multipartite in terms of the number or size of the subsets one needs to consider in a partition $\Ptn$ to make a given correlation $\Ptn$-causal. 
A natural property to demand of such a quantification is that it defines nested sets of correlations: if a correlation is genuinely multipartite noncausal ``to a certain degree'', it should also be contained in the sets of ``less genuinely multipartite noncausal'' correlations (and, eventually, the set of simply noncausal correlations).
It is therefore useful, before providing the relevant definitions in the next subsections, to gather a better understanding of the inclusion relations between $\Ptn$-causal polytopes.

One might intuitively think that there should indeed be nontrivial inclusion relations among those polytopes. 
For example, one might think that a $\Ptn$-causal correlation should also be $\Ptn'$-causal if $\Ptn'$ is a ``coarse-graining'' of $\Ptn$ (i.e., $\Ptn'$ is obtained from $\Ptn$ by grouping some of its groups to define fewer but larger subsets)---or, more generally, when $\Ptn'$ contains fewer subsets than $\Ptn$, i.e.\ $|\Ptn'| < |\Ptn|$. 
This, however, is not true. 
For example, in the tripartite case, a fully causal correlation (i.e., a $\Ptn$-causal one for $\Ptn = \{\{A_1\},\{A_2\},\{A_3\}\}$) compatible with the fixed order $A_1 \prec A_2 \prec A_3$, where $A_2$ comes between $A_1$ and $A_3$, may not be $\Ptn'$-causal for $\Ptn' = \{\{A_1,A_3\},\{A_2\}\}$, since one cannot order $A_2$ with respect to $\{A_1,A_3\}$ when those are taken together. 
In fact, no nontrivial inclusion exists among $\Ptn$-causal polytopes, as established by the following theorem, proved in Appendix~\ref{Pseparationproof}.
\begin{theorem}\label{Pseparation}
Consider an $N$-partite scenario where each party has at least two possible inputs and at least two possible outputs for one value of the inputs. 
Given two distinct nontrivial\footnote{If one of the two partitions is trivial, say $\Ptn' = \{\N\}$, then the $\Ptn$-causal polytope is of course contained in the trivial $\Ptn'$-causal one (which contains all valid probability distributions). Note that for $N=2$ there is only one nontrivial partition; the theorem is thus only relevant for scenarios with $N \ge 3$.} partitions $\Ptn$ and $\Ptn'$ of $\N$ with $|\Ptn|,|\Ptn'| > 1$, the $\Ptn$-causal polytope is not contained in the $\Ptn'$-causal one, nor vice versa.
\end{theorem}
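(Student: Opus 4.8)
The plan is to prove the stronger one-directional statement that for any ordered pair of distinct nontrivial partitions $(\Ptn,\Ptn')$ there exists a correlation that is $\Ptn$-causal but not $\Ptn'$-causal; applying this with $\Ptn$ and $\Ptn'$ exchanged then yields non-inclusion in both directions at once. Throughout I would use only \emph{deterministic} candidate correlations: since a deterministic distribution is a vertex of the probability simplex, it lies in the $\Ptn'$-causal polytope if and only if it is itself a deterministic $\Ptn'$-causal correlation, whose structure---a fixed first group whose outputs depend only on its own inputs, then a dynamically chosen second group, and so on---is exactly the one recalled just after Theorem~\ref{thm:P_causal_polytope}. The workhorse of every construction is the bipartite ``AND''-type correlation, in which two parties $i$ and $j$ output $a_i=a_j=x_ix_j$ while all other parties output the fixed value $0$: this signals both ways between $i$ and $j$, so no causal order can be placed between them.

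I would split the argument according to how $\Ptn$ and $\Ptn'$ differ. First, suppose some pair $i,j$ lies in a common group of $\Ptn$ but in distinct groups of $\Ptn'$. Then I take the correlation $P(\vec a|\vec x)=\delta_{a_i,x_ix_j}\,\delta_{a_j,x_ix_j}\prod_{k\neq i,j}\delta_{a_k,0}$. It is $\Ptn$-causal because the group containing $i$ and $j$ can act first (its outputs depend only on its own inputs), with all remaining groups outputting $0$; but it cannot be $\Ptn'$-causal, since the $\Ptn'$-group containing $i$ has an output depending on $x_j$ while the one containing $j$ has an output depending on $x_i$, so neither of these two groups can ever be the first group in any (dynamical) $\Ptn'$-causal decomposition.

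Second, if no such pair exists then $\Ptn$ refines $\Ptn'$, and being distinct this refinement is strict, so some $\Ptn'$-group $\mathcal{A}^*$ is split by $\Ptn$ into at least two parts; I pick parts $\mathcal{Q}_1\ni i$ and $\mathcal{Q}_2\ni l$ of $\mathcal{A}^*$, another $\Ptn'$-group $\mathcal{B}\neq\mathcal{A}^*$ (which exists as $|\Ptn'|\ge 2$), and a $\Ptn$-part $\mathcal{Q}\ni m$ inside $\mathcal{B}$. Here I would \emph{relay} the signalling through $m$, taking $a_m=x_ix_m$, $a_l=x_mx_l$, and all other outputs $0$: ordering the $\Ptn$-parts as $\mathcal{Q}_1\prec\mathcal{Q}\prec\mathcal{Q}_2\prec\cdots$ makes the distribution $\Ptn$-causal, yet it signals both ways between $\mathcal{A}^*$ (whose output $a_l$ depends on $x_m\in\mathcal{B}$) and $\mathcal{B}$ (whose output $a_m$ depends on $x_i\in\mathcal{A}^*$), so again neither can be a first group.

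To turn ``neither can be first'' into ``not $\Ptn'$-causal'' rigorously I would run an inductive peeling argument: any group of $\Ptn'$ other than $\mathcal{A}^*$ and $\mathcal{B}$ outputs only the constant $0$, so conditioning on its inputs and outputs leaves the relevant correlation on the remaining parties unchanged; peeling such constant groups off one at a time eventually leaves only $\mathcal{A}^*$ and $\mathcal{B}$, which cannot be causally ordered. The main obstacle I anticipate is precisely this step---carefully excluding all dynamical and probabilistic $\Ptn'$-causal decompositions, not merely the fixed-order ones---together with the bookkeeping needed to locate $\mathcal{Q}_1,\mathcal{Q}_2,\mathcal{Q}$ for arbitrary partitions and to confirm that the two cases are exhaustive (any two distinct partitions either separate some common pair or stand in a strict refinement relation). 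Restricting to deterministic correlations keeps the peeling argument manageable, and since the constructions only ever use two inputs and two outputs per party, they apply in any scenario meeting the theorem's hypotheses.
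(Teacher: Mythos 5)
Your proposal is correct and follows essentially the same route as the paper's Appendix~B proof: the same exhaustive case split (some pair separated by $\Ptn'$ but not by $\Ptn$, versus $\Ptn'$ a strict coarse-graining of $\Ptn$), the same two-way-signalling witnesses built from AND-type outputs, and in the coarse-graining case the same idea of sandwiching a group of $\Ptn'$ between two $\Ptn$-parts of another. The only cosmetic differences are that you use minimal ad hoc deterministic correlations rather than the paper's general family $P_{\Ptn,\sigma}^\text{det}$, and your ``peeling'' induction plays exactly the role of the paper's Proposition~\ref{fixtwo}.
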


One may also ask whether, for a given $\Ptn$-causal correlation $P$, there always exists a partition $\Ptn'$ with $2 \le |\Ptn'| < |\Ptn|$ such that $P$ is also $\Ptn'$-causal (recall that the case $|\Ptn'| = 1$ is trivial). 
The answer is negative when mixtures of different causal orders are involved: e.g., in the tripartite case with $\Ptn = \{\{A_1\},\{A_2\},\{A_3\}\}$, a fully causal correlation of the form $P = \frac{1}{6} \, (P_{A_1 \prec A_2 \prec A_3} + P_{A_1 \prec A_3 \prec A_2} + P_{A_2 \prec A_1 \prec A_3} + P_{A_2 \prec A_3 \prec A_1} + P_{A_3 \prec A_1 \prec A_2} + P_{A_3 \prec A_2 \prec A_1})$, where each correlation in the sum is compatible with the corresponding causal order, may not be $\Ptn'$-causal for any $\Ptn'$ of the form $\Ptn' = \{ \{A_i, A_j\}, \{A_k\} \}_{i \neq j \neq k}$, as there is always a term in $P$ above for which $A_k$ comes between $A_i$ and $A_j$. 
For an explicit example one can take the correlation $P$ above to be a mixture of 6 correlations $P_{\Ptn,\sigma}^\text{det}$ introduced in Appendix~\ref{Pseparationproof}.%
\footnote{To see that $P$ thus defined is indeed not $\Ptn'$-causal for any such bipartition, first note that, by symmetry, it suffices to show it is not $\Ptn'$-causal for $\Ptn'=\{\{A_1\},\{A_2,A_3\}\}$.
One can readily show that all such $\Ptn'$-causal inequalities must obey the \mbox{LGYNI}-type inequality $P_{A_1}(1|100)+P_{A_2A_3}(11|011)-P_{A_1A_2A_3}(111|111)\ge 0$ (which, moreover, is a facet of the $\Ptn'$-causal polytope).
It is easily verified that $P$ violates this inequality with the left-hand side obtaining the value $-1/3$.}

The above results tell us that $\Ptn$-causal polytopes do not really define useful classes to directly quantify how genuinely multipartite the noncausality of a correlation is. 
One may wonder whether considering convex hulls of $\Ptn$-causal polytopes allows one to avoid these issues.
For example, is it the case that any $\Ptn$-causal correlation $P$ is contained in the convex hull of all $\Ptn_j'$-causal correlations for all partitions $\Ptn_j'$ with a fixed value of $|\Ptn_j'| = \mathfrak{m}' < |\Ptn|$?%
\footnote{Note that a convex combination of $\Ptn_j'$-causal correlations for various partitions $\Ptn_j'$ with a fixed number of subsets $|\Ptn_j'| = \mathfrak{m}'$ is not necessarily $\Ptn'$-causal for any single partition $\Ptn'$ with the same value of $|\Ptn'| = \mathfrak{m}'$.}
For $\mathfrak{m}'=1$ this is trivial, and this remains true for $\mathfrak{m}'=2$: any $\Ptn$-causal correlation $P$ can be decomposed as a convex combination of $\Ptn_j'$-causal correlations for various partitions $\Ptn_j'$ with $|\Ptn_j'| = 2$. Eq.~\eqref{eqdef:P_causal} is indeed such a decomposition, with the partitions $\Ptn_\ell' = \{\A_\ell, \N \backslash \A_\ell\}$. This is also true, for any value of $\mathfrak{m}'$, for $\Ptn$-causal correlations that are compatible with a fixed causal order between the subsets in $\Ptn$ (or convex mixtures thereof): indeed, such a correlation is also $\Ptn'$-causal for any coarse-grained partition $\Ptn'$ of $\Ptn$ where consecutive subsets (as per the causal order in question, or per each causal order in a convex mixture) of $\Ptn$ are grouped together.
However, this is not true in general for $\mathfrak{m}' > 2$ when dynamical causal orders are involved. 
It is indeed possible to find a 4-partite, fully causal correlation that cannot be expressed as a convex combination of $\Ptn_j'$-causal correlations with all $|\Ptn_j'| = 3$; an explicit counterexample is presented in Appendix~\ref{dynamicalcounter}.

From these observations we conclude that, although grouping parties into $\mathfrak{m}$ subsets seems to be a stronger constraint than grouping parties into some $\mathfrak{m}' < \mathfrak{m}$ subsets, the fact that a correlation is $\Ptn$-causal for some $|\Ptn| = \mathfrak{m} \ge 4$ (or more generally, that it is a convex combination of various $\Ptn_j$-causal correlations with all $|\Ptn_j| = \mathfrak{m} \ge 4$) does not guarantee that it is also $\Ptn'$-causal for some $|\Ptn'| = \mathfrak{m}' < \mathfrak{m}$---unless $\mathfrak{m}'=2$ (or $\mathfrak{m}' = 1$, trivially)---nor that it can be decomposed as a convex combination of $\Ptn_j'$-causal correlations with all $|\Ptn_j'| = \mathfrak{m}'$. In particular, fully causal correlations may not be $\Ptn'$-causal for any $\Ptn'$ with $2 < |\Ptn'| < N$, or convex combinations of such $\Ptn'$-causal correlations.
This remark motivates the definitions in the next subsection.

\subsection{$M$-causal correlations}

\subsubsection{Definition and characterisation}

With the previous discussion in mind, we propose the following definition, as a first refinement between the definitions of fully causal and 2-causal correlations.

\begin{definition}[$M$-causal correlations] \label{def:M_causal}
An $N$-partite correlation is said to be $M$-causal (for $1 \le M \le N$) if and only if it is a convex combination of $\Ptn$-causal correlations, for various partitions $\Ptn$ of $\N$ into $|\Ptn| \ge M$ subsets.

More explicitly: $P$ is $M$-causal if and only if it can be decomposed as
\begin{equation}
P(\vec a|\vec x) = \sum_{\Ptn: \, |\Ptn| \ge M} q_\Ptn \, P_\Ptn(\vec a|\vec x), \label{eqdef:M_causal}
\end{equation}
where the sum is over all partitions $\Ptn$ of $\N$ into $M$ subsets or more, with $q_\Ptn \ge 0$ for each $\Ptn$, $\sum_{\Ptn} q_\Ptn = 1$, and where each $P_\Ptn(\vec a|\vec x)$ is a $\Ptn$-causal correlation.
\end{definition}

For $M=1$, any correlation is trivially 1-causal, since for $\Ptn = \{\N\}$ any correlation is $\Ptn$-causal. 
For $M=N$, the definition of $M$-causal correlations above is equivalent to that of fully causal correlations, Definition~\ref{def:fully_causal}~\cite{oreshkov16,abbott16}.

For $M=2$, the above definition is equivalent to that of 2-causal correlations as introduced through Definition~\ref{def:2_causal}. 
To see this, recall first (from the discussion in the previous subsection), that any $\Ptn$-causal correlation with $|\Ptn| \ge 2$ can be written as a convex combination of some $\Ptn'$-causal correlations, for various bipartitions $\Ptn'$ with $|\Ptn'| = 2$. 
It follows that, for $M=2$, it would be equivalent to have the condition $|\Ptn| = 2$ instead of $|\Ptn| \ge 2$ in Definition~\ref{def:M_causal} of $M$-causal correlations. 
Definition~\ref{def:2_causal} is then recovered when writing the bipartitions in the decomposition as $\Ptn = \{ \K, \N \backslash \K \}$, using Eq.~\eqref{eqdef:P_causal} from the definition of $\Ptn$-causal correlations, and rearranging the terms in the decomposition. 
Hence, Definition~\ref{def:2_causal} is in fact equivalent to saying that 2-causal correlations are those that can be written as a convex mixture of $\Ptn$-causal correlations, for different partitions $\Ptn$ of $\N$ into $|\Ptn| \ge 2$ subsets, thus justifying further our definition of genuinely $N$-partite noncausal correlations as those that cannot be written as such a convex mixture (or equivalently, those that are not $M$-causal for any $M>1$).
Note that since we used the constraint $|\Ptn| \ge M$ rather than $|\Ptn| = M$ in Eq.~\eqref{eqdef:M_causal},%
\footnote{Replacing the condition $|\Ptn| \ge M$ by $|\Ptn| = M$ in Definition~\ref{def:M_causal} for arbitrary $M$, we could define ``$({=}M)$-causal correlations'', which would be distinct from $M$-causal correlations for $2<M<N$. 
We would also have that ``$({=}M)$-causal correlations'' form a convex polytope; however, the various ``$({=}M)$-causal polytopes'' would not necessarily be included in one another for distinct values of $M$, as discussed in the previous subsection. 
\label{ftnote:eqK}} 
our definition establishes a hierarchy of correlations as desired, with $M$-causal $\Rightarrow$ $M'$-causal if $M \ge M'$.

With the above definition of $M$-causal correlations, we have the following:
\begin{theorem}
\label{thm:M_causal_polytope}
For any given value of $M$ (with $1 \le M \le N$), the set of $M$-causal correlations forms a convex polytope, whose (finitely many) extremal points correspond to deterministic $\Ptn$-causal correlations, for all possible partitions $\Ptn$ with $|\Ptn| \ge M$---that is, deterministic $M$-causal correlations.
\end{theorem}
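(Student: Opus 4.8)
The plan is to read off from Definition~\ref{def:M_causal} that the set of $M$-causal correlations is, by construction, the convex hull of the union of the $\Ptn$-causal sets taken over all partitions $\Ptn$ with $|\Ptn| \ge M$, and then to combine this with Theorem~\ref{thm:P_causal_polytope} and the finiteness of the collection of partitions. First I would observe that, since $\N$ is finite, there are only finitely many partitions $\Ptn$ of $\N$, and hence only finitely many with $|\Ptn| \ge M$. By Theorem~\ref{thm:P_causal_polytope}, each such $\Ptn$-causal set is a convex polytope with a finite vertex set, namely the deterministic $\Ptn$-causal correlations. Equation~\eqref{eqdef:M_causal} then says precisely that the $M$-causal set is the convex hull of the union of these finitely many polytopes.

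Next I would invoke the standard fact that the convex hull of a finite union of polytopes is again a polytope whose extremal points lie in the union of the extremal points of the constituents. Concretely, writing each $\Ptn$-causal polytope as the convex hull of its finite vertex set $V_\Ptn$, the $M$-causal set equals $\mathrm{conv}\big(\bigcup_{\Ptn:\,|\Ptn|\ge M} V_\Ptn\big)$, i.e.\ the convex hull of a finite point set; this is a polytope, and each of its vertices must be one of those points, hence an element of $\bigcup_{\Ptn:\,|\Ptn|\ge M} V_\Ptn$.

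Finally I would identify these vertices with the deterministic $M$-causal correlations, checking both inclusions. On the one hand, every point of $V_\Ptn$ (for $|\Ptn|\ge M$) is a deterministic $\Ptn$-causal correlation and thus, viewed as a single-term convex combination in Eq.~\eqref{eqdef:M_causal}, a deterministic $M$-causal correlation; so every vertex of the $M$-causal polytope is a deterministic $M$-causal correlation. On the other hand, any deterministic correlation takes only the values $0$ and $1$ and is therefore a vertex of the full probability simplex, so it cannot be written as a nontrivial convex combination of distinct probability distributions; hence every deterministic $M$-causal correlation is automatically extremal in the $M$-causal polytope. Together these give the claimed characterisation.

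The argument is essentially routine once Theorem~\ref{thm:P_causal_polytope} is available, and I do not expect any serious obstacle. The only step requiring a little care is the last identification---verifying that ``deterministic $\Ptn$-causal for some $|\Ptn|\ge M$'' and ``deterministic $M$-causal'' designate the same set---which rests on the extremality of deterministic distributions in the simplex, exactly as in the proof of Theorem~\ref{thm:2_causal_polytope}.
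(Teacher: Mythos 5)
Your proof is correct and follows essentially the same route as the paper: the $M$-causal set is the convex hull of the finitely many $\Ptn$-causal polytopes with $|\Ptn|\ge M$, hence a polytope whose vertices come from the deterministic $\Ptn$-causal correlations of Theorem~\ref{thm:P_causal_polytope}. Your final step—checking that every deterministic $M$-causal correlation remains extremal because deterministic distributions are vertices of the full probability simplex—is a detail the paper leaves implicit, and it is a worthwhile addition rather than a deviation.
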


\begin{proof}
According to Eq.~\eqref{eqdef:M_causal}, the set of $M$-causal correlations is the convex hull of the polytopes of $\Ptn$-causal correlations with $|\Ptn| \ge M$. Since there is a finite number of such polytopes, the set of $M$-causal correlations is itself a convex polytope; its extremal points are those of the various $\Ptn$-causal polytopes with $|\Ptn| \ge M$, namely deterministic $\Ptn$-causal correlations (see Theorem~\ref{thm:P_causal_polytope}).
\end{proof}

\begin{figure}
	\begin{center}
		\includegraphics[width=.8\columnwidth]{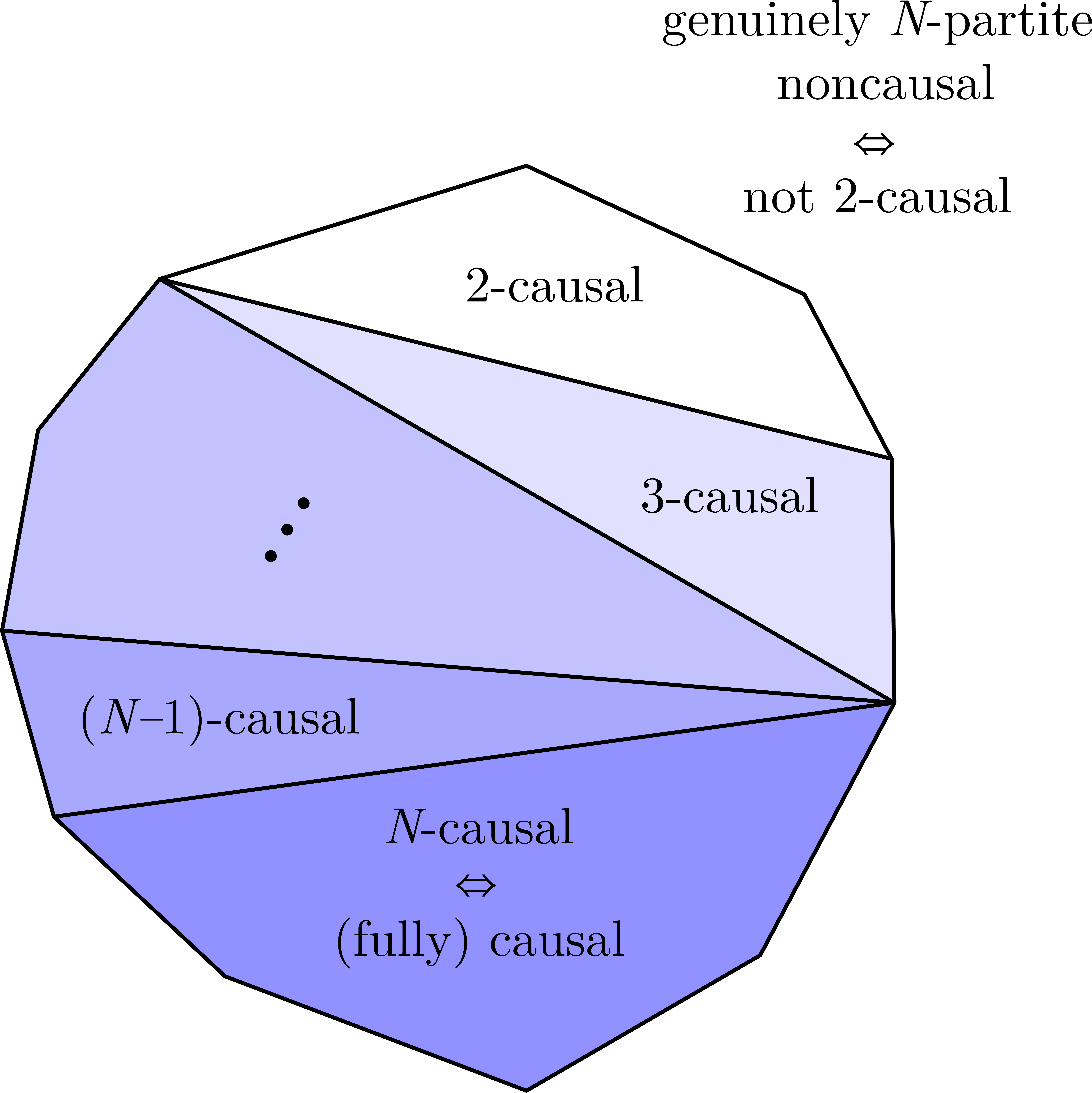}
	\end{center}
\caption{Sketch of the full hierarchy of nested $M$-causal polytopes, that allows one to refine the characterisation represented in Fig.~\ref{fig:2polytopes}. The vertices of each $M$-causal polytope correspond to deterministic $M$-causal correlations, and their facets correspond to $M$-causal inequalities.
An $M$-causal correlation is also $M'$-causal for any $M' \le M$; the largest $M$ for which a correlation $P$ is $M$-causal quantifies how genuinely multipartite its noncausality is.}
\label{fig:hierarchy}
\end{figure}

We thus obtain a family of convex polytopes---which we shall call \emph{$M$-causal polytopes}---included in one another, see Fig.~\ref{fig:hierarchy}. 
The facets of these polytopes are \emph{$M$-causal inequalities}, which define a hierarchy of criteria: e.g., if all $M$-causal inequalities are satisfied, then so are all $M'$-causal inequalities if $M' \le M$---or equivalently: if some $M'$-causal inequality is violated, then some $M$-causal inequality must also be violated if $M \ge M'$.
Given a correlation $P$, one can in principle test to which set it belongs. The largest $M$ for which $P$ is $M$-causal can be used as a measure of how genuinely multipartite its noncausality is: it means that $P$ can be obtained as a convex combination of $\Ptn$-causal correlations with all $|\Ptn| \ge M$, but not with all $|\Ptn| > M$---indeed, if $M<N$ then $P$ violates some $M'$-causal inequality for any $M'>M$ (with $M' \le N$).
If that $M$ is $1$, $P$ is a genuinely $N$-partite noncausal correlation; if it is $N$, then $P$ is fully causal, hence it displays no noncausality (genuinely multipartite or not).

\subsubsection{A family of $M$-causal inequalities}

The general $N$-partite 2-causal inequality~\eqref{eq:genIneq2} can easily be modified to give an $M$-causal inequality that is valid---although not tight in general, as observed before---for all $N$ and $M$ (with $1 \le M \le N$), simply by changing the bound.
Indeed, this bound is derived from the largest possible number of pairs of parties that can be in a single subset of a given partition, and this can easily be recalculated for $M$-subset partitions rather than bipartitions. We thus obtain that
\begin{equation}\label{eq:McausalIneq}
	J_2(N)=\sum_{\{i,j\}\subset \N} L_N(i,j)\ \ge \ -\binom{N-M+1}{2}
\end{equation}
for any $M$-causal correlation.
This updated bound is proved in Appendix~\ref{appndx:genIneq}.

As for Eq.~\eqref{eq:genIneq2} it is easy to see that, for any $N,M$, there are $M$-causal correlations saturating the inequality~\eqref{eq:McausalIneq}.
Consider, for instance, the partition $\Ptn=\{\A_1,\dots,\A_M\}$ of $\N$ with $\A_1=\{1,\dots,N-M+1\}$ and $\A_\ell=\{N-M+\ell\}$ for $2\le \ell \le M$, and take the distribution
\begin{equation}\label{eq:McausalCorrelation}
	P(\vec{a}|\vec{x})= \delta_{\vec{a}_{\A_1},f(\vec{x}_{\A_1})} \, \delta_{\vec{a}_{\N\backslash\A_1},\vec{0}},
\end{equation}
where we use the same function $f$ as in Eq.~\eqref{eq:genIneqViolatingDist}. 
Analogous reasoning shows that this correlation indeed reaches the bound~\eqref{eq:McausalIneq}.

Since this (reachable) lower bound is different for each possible value of $M$, this implies, in particular, that (for the $N$-partite lazy scenario) all the inclusions $N$-causal $\subset (N{-}1)$-causal $\subset \cdots \subset$ 3-causal $\subset$ 2-causal in the hierarchy of $M$-causal polytopes are strict.
In fact, redas for inequalities~\eqref{eq:genIneq1} and~\eqref{eq:genIneq2} (see Footnote~\ref{footnote:genScenarios}), the proof of Eq.~\eqref{eq:McausalIneq} holds in any nontrivial scenario (with arbitrarily many inputs and outputs), of which the lazy scenario is the simplest example for all $N$.
Moreover, one can saturate it in such scenarios by trivially extending the $M$-causal correlation~\eqref{eq:McausalCorrelation} (e.g., by producing a constant output on all other inputs) and thus these inclusions are strict in general.

\subsection{Size-$S$-causal correlations}

In the previous subsection we used the number of subsets needed in a partition to quantify how genuinely multipartite the noncausality of a correlation is. 
Here we present an alternative quantification, based on the size of the biggest subset in a partition, rather than the number of subsets.

Intuitively, the bigger the subsets in a partition $\Ptn$ needed to reproduce a correlation, the more genuinely multipartite noncausal the corresponding $\Ptn$-causal correlations are. 
However, the discussion of Sec.~\ref{subsec:nonincl_Pcausal} implies that, as was the case with $M$-causal correlations, it is not sufficient to simply ask whether a given correlation is $\Ptn$-causal for some partition $\Ptn$ with subsets of a particular size.
We therefore focus on classes of correlations that can be written as mixtures of $\Ptn$-causal ones whose largest subset is not larger than some number $S$. 
For convenience, we introduce the notation
\begin{equation}
\mathfrak{s}(\Ptn)\coloneqq\max_{\A\in\Ptn}\left|\A\right|.
\label{eq:maxSizeDef}
\end{equation}
We then take the following definition:

\begin{definition}[Size-$S$-causal correlations] \label{def:S_causal}
An $N$-partite correlation is said to be size-$S$-causal (for $1 \le S \le N$) if and only if it is a convex combination of $\Ptn$-causal correlations, for various partitions $\Ptn$ whose subsets are no larger than $S$.

More explicitly: $P$ is size-$S$-causal if and only if it can be decomposed as
\begin{equation}
P(\vec a|\vec x) = \hspace{-2mm}\sum_{\Ptn: \, \mathfrak{s}(\Ptn) \leq S}\hspace{-2mm} q_\Ptn \, P_\Ptn(\vec a|\vec x), \label{eqdef:S_causal}
\end{equation}
where the sum is over all partitions $\Ptn$ of $\N$ with no subset of size larger than $S$, with $q_\Ptn \ge 0$ for each $\Ptn$, $\sum_{\Ptn} q_\Ptn = 1$, and where each $P_\Ptn(\vec a|\vec x)$ is a $\Ptn$-causal correlation.
\end{definition}

Any $N$-partite correlation is trivially size-$N$-causal, while size-$1$-causal correlations coincide with fully causal correlations. Furthermore, noting that $\mathfrak{s}(\Ptn) \leq N-1$ if and only if $|\Ptn| \ge 2$, we see that the set of size-$(N{-}1)$-causal correlations coincides with that of $2$-causal correlations. 
Hence, the definition of size-$S$-causal correlations is another possible generalisation of that of 2-causal ones.
From this new perspective, $2$-causal correlations can be seen as those that can be realised using (probabilistic mixtures of) noncausal resources available to groups of parties of size $N{-}1$ or less. 
This further strengthens the definition of $2$-causal correlations as the largest set of correlations that do not possess genuinely $N$-partite noncausality.

Without repeating in full detail, it is clear that size-$S$-causal correlations define a structure similar to that of $M$-causal correlations: for each $S$, size-$S$-causal correlations define \emph{size-$S$-causal polytopes} whose vertices are deterministic size-$S$-causal correlations and whose facets define \emph{size-$S$-causal inequalities}. 
For $S \le S'$, all size-$S$-causal correlations are also size-$S'$-causal, so that the various size-$S$-causal polytopes are included in one another. 
The lowest $S$ for which a correlation is size-$S$-causal also provides a measure of how genuinely multipartite the corresponding noncausal resource is, distinct to that defined by $M$-causal correlations.
 
It is also possible here to generalise inequality~\eqref{eq:genIneq2} to size-$S$-causal correlations by changing the bound. As proven in Appendix~\ref{appndx:genIneq}, we thus obtain the size-$S$-causal inequality
\begin{equation}
J_2(N)\geq - \left\lfloor \frac{N}{S}\right\rfloor \binom{S}{2} - \binom{N - \left\lfloor \frac{N}{S}\right\rfloor S}{2}
\label{sizeSbound}
\end{equation}
(where $\left\lfloor x \right\rfloor$ denotes the largest integer smaller than or equal to $x$).
Although, once again, this inequality is not tight in the sense that it does not define a facet of the size-$S$-causal polytope, its lower bound can be saturated by a size-$S$-causal correlation for each value of $S$, for instance by considering the partition $\Ptn = \{\A_1,\ldots,\A_{\left\lfloor \frac{N}{S}\right\rfloor} (,\A_{\left\lfloor \frac{N}{S}\right\rfloor+1})\}$ of $\N$ into $\left\lfloor \frac{N}{S}\right\rfloor$ groups of $S$ parties, and (if $N$ is not a multiple of $S$) a last group with the remaining $N - \left\lfloor \frac{N}{S}\right\rfloor S$ parties, and by taking the deterministic correlation
\begin{equation}
	P(\vec{a}|\vec{x})= \prod_\ell \delta_{\vec{a}_{\A_\ell},f(\vec{x}_{\A_\ell})}
\end{equation}
(with again the same function $f$ as in Eq.~\eqref{eq:genIneqViolatingDist}). 
Since the (reachable) lower bounds in Eq.~\eqref{sizeSbound} are different for all possible values of $S$, this implies, again, that all the inclusions size-$1$-causal $\subset$ size-$2$-causal $\subset \cdots \subset$ size-$(N{-}1)$-causal in the hierarchy of size-$S$-causal polytopes are strict in general.

\subsection{Comparing the polytopes of $M$-causal and size-$S$-causal correlations} \label{sec:compare_M_sizeS}

A relation between $M$-causal and size-$S$-causal correlations can be established through the following, straightforwardly verifiable, inequalities:
\begin{equation} \label{eq:ineqs_S_M}
|\Ptn| - 1 +\mathfrak{s}(\Ptn) \leq N \leq |\Ptn| \, \mathfrak{s}(\Ptn).
\end{equation}
From these it follows that, for $N$ parties: 
\pagebreak
\begin{theorem}\label{thm:inclusions} $ $
\begin{itemize}
	\item If a correlation is $M$-causal, then it is size-$S$-causal for all $S \ge N{-}M{+}1$.
	\item If a correlation is size-$S$-causal, then it is $M$-causal for all $M \le \left\lceil \frac{N}{S}\right\rceil$ (where $\left\lceil x \right\rceil$ denotes the smallest integer larger than or equal to $x$). 
\end{itemize}
\end{theorem}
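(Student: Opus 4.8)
The plan is to derive both implications directly from the two elementary counting inequalities in Eq.~\eqref{eq:ineqs_S_M}, combined with Definitions~\ref{def:M_causal} and~\ref{def:S_causal}. First I would verify these inequalities themselves, as they are the whole engine of the argument: the lower bound $|\Ptn|-1+\mathfrak{s}(\Ptn)\le N$ holds because the largest block of $\Ptn$ contributes $\mathfrak{s}(\Ptn)$ parties while each of the remaining $|\Ptn|-1$ blocks contributes at least one, and the upper bound $N\le |\Ptn|\,\mathfrak{s}(\Ptn)$ holds because each of the $|\Ptn|$ blocks contributes at most $\mathfrak{s}(\Ptn)$ parties. Both are immediate consequences of the fact that $\Ptn$ partitions $\N$ into nonempty subsets.

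For the first bullet, I would take $P$ to be $M$-causal and write it, as in Eq.~\eqref{eqdef:M_causal}, as a convex combination $P=\sum_{\Ptn}q_\Ptn P_\Ptn$ of $\Ptn$-causal correlations with $|\Ptn|\ge M$. The key observation is that the implication is witnessed term by term: for each partition $\Ptn$ appearing in this decomposition, the lower bound gives $\mathfrak{s}(\Ptn)\le N-|\Ptn|+1\le N-M+1\le S$ whenever $S\ge N-M+1$. Thus every $\Ptn$ in the mixture already satisfies $\mathfrak{s}(\Ptn)\le S$, so the very same convex decomposition exhibits $P$ as size-$S$-causal per Eq.~\eqref{eqdef:S_causal}, with no re-decomposition required.

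For the second bullet, I would symmetrically take $P$ to be size-$S$-causal and write it, as in Eq.~\eqref{eqdef:S_causal}, as $P=\sum_{\Ptn}q_\Ptn P_\Ptn$ with each $\mathfrak{s}(\Ptn)\le S$. For each such $\Ptn$ the upper bound gives $|\Ptn|\ge N/\mathfrak{s}(\Ptn)\ge N/S$; since $|\Ptn|$ is an integer this sharpens to $|\Ptn|\ge\lceil N/S\rceil\ge M$ for every $M\le\lceil N/S\rceil$. Hence the same mixture exhibits $P$ as $M$-causal per Eq.~\eqref{eqdef:M_causal}.

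There is essentially no genuine obstacle here: once the counting inequalities are in place both inclusions fall out by a direct, term-by-term inspection of the convex decompositions. The only point demanding a little care is the integer rounding in the second bullet---one must use the integrality of $|\Ptn|$ to pass from the real inequality $|\Ptn|\ge N/S$ to $|\Ptn|\ge\lceil N/S\rceil$ (whereas in the first bullet the target bound $N-M+1$ is already an integer, so no ceiling is needed). It is also worth noting that, because neither implication requires re-decomposing $P$, no appeal to the polytope structure or to extremal correlations is needed beyond the defining convex decompositions.
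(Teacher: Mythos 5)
Your proof is correct and follows essentially the same route as the paper, which simply states the two counting inequalities $|\Ptn|-1+\mathfrak{s}(\Ptn)\le N\le|\Ptn|\,\mathfrak{s}(\Ptn)$ as "straightforwardly verifiable" and asserts that the theorem follows; you have merely filled in the term-by-term verification on the convex decompositions that the paper leaves implicit.
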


It is furthermore possible to show that the inclusion relations between $M$-causal and size-$S$-causal polytopes implied by Theorem~\ref{thm:inclusions} are complete, in the sense that no other inclusion exists that is not implied by the theorem.
We prove this in Appendix~\ref{app_sec:inclusions}.
Together with the respective inclusion relations of each hierarchy separately, this result thus fully characterises the inclusion relations of all the classes of noncausal correlations that we introduced; the situation is illustrated in Fig.~\ref{fig:inclusions} for the 6-partite case as an example.

\begin{figure}%
	\begin{center}
		\includegraphics[width=0.45\columnwidth]{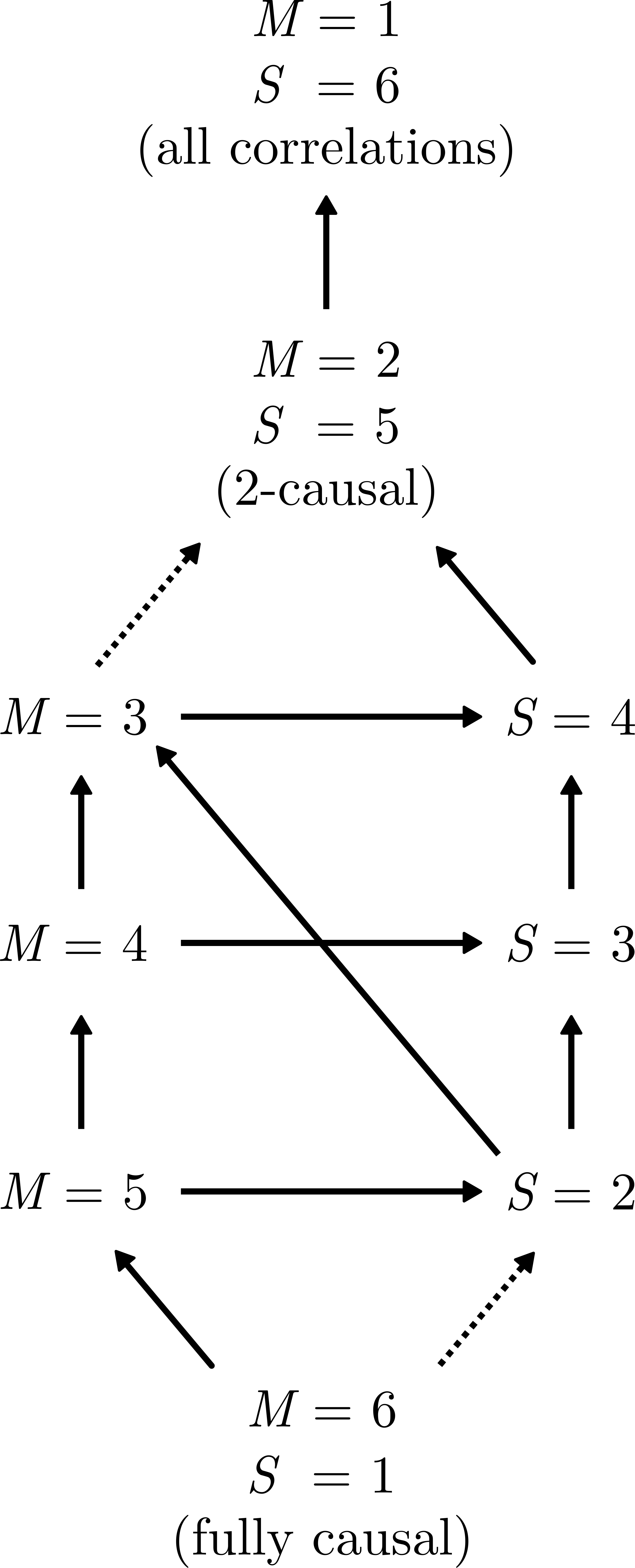}
	\end{center}
\caption{Diagram of polytope inclusion relations for $N=6$ parties. Each node in the diagram represents the polytope of $M$-causal or size-$S$-causal correlations for the corresponding value of $M$ or $S$. Arrows represent (strict) inclusion relations: the polytope at the start of an arrow is included in the polytope at the end of the arrow. The paths ascending the left and right sides of the diagram show the inclusion relations implied by the fact that the $M$-causal and size-$S$-causal polytopes respectively form strict hierarchies. The ``cross-arrows'' between these paths are the inclusions implied by Theorem~\ref{thm:inclusions}. The dotted arrows can be implied by transitivity from other arrows and are thus redundant. The inclusion relations shown in the diagram are complete, in the sense that there is no inclusion not represented by an arrow or a sequence of arrows.
}%
\label{fig:inclusions}%
\end{figure}

\section{Discussion}

The possibility that nature might allow for correlations incompatible with a definite causal order opens exciting questions. 
It has been suggested that such correlations might arise in the context of quantum theories of gravity~\cite{hardy2007towards} or in space-time geometries with closed time-like curves~\cite{baumeler2017reversible,araujo2017quantum}, although these possibilities, like that of observing noncausal correlations in laboratory experiments, are as yet unverified. 
Motivated by the fact that noncausal resources exhibit interesting new features in multipartite scenarios~\cite{baumeler13,baumeler14,oreshkov16,abbott16}, we aimed here to clarify when noncausal correlations can be considered to be a genuinely multipartite resource.

In addressing this task, we first proposed a criterion to decide whether a given correlation shared by $N$ parties is ``genuinely $N$-partite noncausal''---i.e., its noncausality is indeed a genuinely $N$-partite resource---or not. We then refined our approach into two distinct criteria quantifying the extent to which the noncausality of a correlation is a genuinely multipartite resource.
Both criteria are based around asking whether the correlation under consideration is compatible with certain subsets being grouped together---which are thus able to share arbitrary noncausal resources---and with a well-defined causal order existing between these groups of parties.
The first criterion is based on the largest number $M$ of such subsets that can be causally ordered while reproducing the correlation in question: the smaller $M$, the more genuinely multipartite the noncausality exhibited by the correlation.
If $M=1$, then no subset of parties has a well-defined causal relation with any other, and the correlation is genuinely $N$-partite noncausal.
The second criterion instead looks at how large the subsets that can be causally ordered are: if an $N$-partite correlation can be reproduced with subsets containing no more than $S \le N$ parties, then $S$-partite noncausal resources are sufficient to reproduce the correlation.
Thus, the larger $S$ required, the more genuinely multipartite the correlation.
If $S=N$, then again the correlation is genuinely $N$-partite noncausal.
Although these two criteria define different classes of correlations in general, they coincide on the edge cases and thus lead to exactly the same definition of genuinely $N$-partite noncausal correlations, adding support to the robustness of our definition.
It nonetheless remains to be seen as to which measure of genuine multipartiteness is the most appropriate (or, in what situations one is more pertinent than the other).
 
All the classes of correlations we introduced through these criteria conveniently form polytopes, whose vertices are deterministic correlations and whose facets define different classes of inequalities. 
Of particular interest are the ``2-causal'' correlations, which are the most general correlations that are not genuinely $N$-partite noncausal. 
We completely characterised the 2-causal polytope for the simplest nontrivial tripartite scenario and found that almost all of the 473 nontrivial classes of 2-causal inequalities can be violated by process matrix correlations. 
However, we were unable to find any violation for 2 of those inequalities; this stands in contrast to previous studies of causal inequalities, where violations with process matrices were always found\footnote{At least for standard causal inequalities that bound probabilities directly; for \emph{entropic} causal inequalities, which only provide a relaxed characterisation of the set of causal correlations, no violations were found so far~\cite{miklin17}. It would nevertheless also be interesting to investigate how genuinely multipartite noncausality can be characterised with the entropic approach.}~\cite{oreshkov12,baumeler13,baumeler14,branciard16,abbott16}.
Although it remains to be confirmed whether this is simply a failure of the search method we used, we provided some intuition why such a violation would in fact be a surprise.

Our definition of genuinely $N$-partite noncausality is analogous to the corresponding notion for nonlocality originally due to Svetlichny~\cite{svetlichny87,seevinck02,collins02}. 
It is known, however, that that notion harbours some issues: for example, it is not robust under local operations, a necessary requirement for an operational resource theory of nonlocality~\cite{gallego12,bancal13}. 
In order to overcome these issues, additional constraints must be imposed on the correlations shared by subsets of parties when defining correlations that are not genuinely multipartite nonlocal.
In the case of noncausality, however, there appears to be no clear reason to impose any additional such constraints. 
For nonlocal resources, issues arise in particular from the possibility that different parties might access the resource at different times, with an earlier party then communicating with a later one. 
This type of issue is not pertinent for noncausal resources, where the causal order (be it definite or indefinite) between parties is determined by the resource itself, and additional communication beyond what the resource specifies seems to fall outside the relevant framework.
More generally, however, an operational framework and understanding of the relevant ``free operations'' for noncausal resources remains to be properly developed.

Finally, in this paper we only considered correlations from a fully theory- and device-independent perspective; it would be interesting to develop similar notions within specific physical theories like the process matrix framework, where quantum theory holds locally for each party. 
Process matrices that cannot be realised with a definite causal order are called causally nonseparable~\cite{oreshkov12}, and it would be interesting to study a notion of genuinely multipartite causal nonseparability. 
It should, however, be noted that different possible notions of multipartite causal (non)separability have been proposed~\cite{araujo15,oreshkov16}, so a better understanding of their significance would be necessary in order to extend the notions we have developed here to that framework.

\section*{Acknowledgements}

A.A.A., J.W.\ and C.B.\ acknowledge support through the `Retour Post-Doctorants' program (ANR-13-PDOC-0026) of the French National Research Agency. 
F.C.\ acknowledges support through an Australian Research Council Discovery Early Career Researcher Award (DE170100712). 
This publication was made possible through the support of a grant from the John Templeton Foundation. 
The opinions expressed in this publication are those of the authors and do not necessarily reflect the views of the John Templeton Foundation. 
F.C.\ acknowledges the traditional owners of the land on which the University of Queensland is situated, the Turrbal and Jagera people.

% Comment this out to remove titles from bibliography
\nocite{apsrev41Control} 

\bibliographystyle{apsrev4-1_modified}
\bibliography{bib_genuine_multi_noncausal}

\appendix

\section{Proof of the generalised 2-causal inequalities and their bounds}\label{appndx:genIneq}

\subsection{Proof of inequality~\eqref{eq:genIneq1}}

To prove that Eq.~\eqref{eq:genIneq1} is a valid 2-causal inequality for all $N$, it suffices to show that it holds for all deterministic 2-causal correlations.
For a nonempty strict subset $\K$ of $\N$, let $P^\text{det}(\vec{a}|\vec{x})=P^\text{det}(\vec{a}_{\K}|\vec{x}_{\K})P^\text{det}_{\vec{x}_{\K},\vec{a}_{\K}}(\vec{a}_{\N\backslash\K}|\vec{x}_{\N\backslash\K})$
be an arbitrary deterministic correlation compatible with the causal order $\K\prec \N\backslash\K$.
Then, since $P^\text{det}(\vec{a}_{\K}|\vec{x})=P^\text{det}(\vec{a}_{\K}|\vec{x}_{\K})$, it follows that
\begin{align}
	& P^\text{det}\big(\vec{a}_{\K}=\vec{1}\,|\,\vec{x}_{\K}=\vec{1},\vec{x}_{\N\backslash \K}=\vec{0}\big) \notag \\
	& \quad = P^\text{det}\big(\vec{a}_{\K}=\vec{1}\,|\,\vec{x}=\vec{1}\big) \ \ge \ P^\text{det}\big(\vec{a}=\vec{1}\,|\,\vec{x}=\vec{1}\big)
\end{align}
and hence $P^\text{det}\big(\vec{a}_{\K}=\vec{1}\,|\,\vec{x}_{\K}=\vec{1},\vec{x}_{\N\backslash \K}=\vec{0}\big) - P^\text{det}\big(\vec{a}=\vec{1}\,|\,\vec{x}=\vec{1}\big) \ge 0$.
Since $J_1(N)$ is then obtained by adding some more nonnegative terms $P^\text{det}\big(\vec{a}_{\K'}=\vec{1}\,|\,\cdots) \ge 0$, this proves the validity of Eq.~\eqref{eq:genIneq1} for any 2-causal correlation.

\subsection{Proof of inequalities~\eqref{eq:genIneq2}, \eqref{eq:McausalIneq} and~\eqref{sizeSbound} for $M$-causal and size-$S$-causal correlations}

The $M$-causal inequality~\eqref{eq:McausalIneq} and the size-$S$-causal inequality~\eqref{sizeSbound} are defined as different bounds for the expression $J_2(N)=\sum_{\{i,j\}\subset \N} L_N(i,j)$, with the summands defined in Eq.~\eqref{bilgynies}, while the 2-causal inequality~\eqref{eq:genIneq2} coincides with the particular cases $M=2$ and $S=N-1$. We shall first prove a bound for $J_2(N)$ that holds for $\Ptn$-causal correlations, for any partition $\Ptn$, and then use this bound to derive the corresponding $M$-causal, size-$S$-causal (and consequently the 2-causal) bounds.

Firstly, let us note that the observation made at the end of Sec.~\ref{subsec:Pcausal} that the response function determining the outputs of a deterministic $\Ptn$-causal correlation can be seen as processing deterministically one input after another and consequently defining a (dynamical) causal order between the subsets in $\Ptn$, also implies the following result (which will be used below and in the subsequent appendices):

\begin{proposition}\label{fixtwo}
For a deterministic $\Ptn$-causal correlation $P$, given two subsets $\A_\ell$ and $\A_m$ in $\Ptn$, the vector of inputs $\vec x_{\N \backslash (\A_\ell \cup \A_m)}$ for the parties that are neither in $\A_\ell$ nor in $\A_m$ determines a (not necessarily unique) causal order between $\A_\ell$ and $\A_m$, $\A_\ell \prec \A_m$ or $\A_m \prec \A_\ell$.

More technically: for any $\vec x_{\backslash \ell m} \coloneqq \vec x_{\N \backslash (\A_\ell \cup \A_m)}$, a deterministic $\Ptn$-causal correlation $P$ satisfies either $P(\vec a_{\A_\ell}|\vec x_{\backslash \ell m},\vec x_{A_\ell},\vec x_{\A_m}) = P(\vec a_{\A_\ell}|\vec x_{\backslash \ell m},\vec x_{A_\ell},\vec x_{\A_m}^{\,\prime})$ for all $\vec x_{A_\ell}, \vec x_{\A_m}, \vec x_{\A_m}^{\,\prime}, \vec a_{\A_\ell}$, or $P(\vec a_{\A_m}|\vec x_{\backslash \ell m},\vec x_{A_\ell},\vec x_{\A_m}) = P(\vec a_{\A_m}|\vec x_{\backslash \ell m},\vec x_{A_\ell}^{\,\prime},\vec x_{\A_m})$ for all $\vec x_{A_m}, \vec x_{A_\ell}^{\,\prime},\vec x_{\A_m},\vec a_{\A_m}$---i.e., in short, either $P(\vec a_{\A_\ell}|\vec x) = P(\vec a_{\A_\ell}|\vec x_{\N \backslash \A_m})$ or $P(\vec a_{\A_m}|\vec x) = P(\vec a_{\A_m}|\vec x_{\N \backslash \A_\ell})$.
\end{proposition}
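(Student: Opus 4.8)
The plan is to prove the statement by induction on the number of subsets $|\Ptn|$, exploiting the recursive structure of Definition~\ref{def:P_causal} for deterministic correlations. The underlying intuition is exactly the dynamical-unfolding picture recalled just above the statement: for a fixed global input, a deterministic $\Ptn$-causal correlation processes the subsets of $\Ptn$ one at a time, each subset's output being a function of the inputs of all subsets that acted before it together with its own input. Fixing $\vec x_{\backslash\ell m}$, I want to argue that this unfolding proceeds identically --- independently of $\vec x_{\A_\ell}$ and $\vec x_{\A_m}$ --- up until the first moment at which one of $\A_\ell$ or $\A_m$ is selected to act, so that \emph{which} of the two acts first is fixed by $\vec x_{\backslash\ell m}$ alone; whichever acts first then has outputs independent of the other's inputs.

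To make this precise, first I would recall that a deterministic $\Ptn$-causal correlation with $|\Ptn|\ge 2$ is a single term of the form~\eqref{eqdef:P_causal}, namely $P(\vec a|\vec x) = P_{\A_{t_1}}(\vec a_{\A_{t_1}}|\vec x_{\A_{t_1}})\,P_{\A_{t_1},\vec x_{\A_{t_1}},\vec a_{\A_{t_1}}}(\vec a_{\N\backslash\A_{t_1}}|\vec x_{\N\backslash\A_{t_1}})$, where $\A_{t_1}\in\Ptn$ is a fixed first-acting subset, $P_{\A_{t_1}}$ is deterministic, and the second factor is a deterministic $(\Ptn\backslash\{\A_{t_1}\})$-causal correlation on $\N\backslash\A_{t_1}$. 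I then distinguish three cases according to whether $\A_{t_1}$ equals $\A_\ell$, equals $\A_m$, or is neither. If $\A_{t_1}=\A_\ell$, its output $\vec a_{\A_\ell}$ depends only on $\vec x_{\A_\ell}$, so $P(\vec a_{\A_\ell}|\vec x)=P(\vec a_{\A_\ell}|\vec x_{\N\backslash\A_m})$ holds trivially for every $\vec x_{\backslash\ell m}$, giving $\A_\ell\prec\A_m$; the case $\A_{t_1}=\A_m$ is symmetric. The base case $|\Ptn|=2$ forces $\Ptn=\{\A_\ell,\A_m\}$ and is thus already covered.

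The substantive case is $\A_{t_1}\notin\{\A_\ell,\A_m\}$: since subsets of a partition are disjoint, $\A_{t_1}\subseteq\N\backslash(\A_\ell\cup\A_m)$, so its input $\vec x_{\A_{t_1}}$ is part of $\vec x_{\backslash\ell m}$, and --- because $P_{\A_{t_1}}$ is deterministic --- so is the resulting value of $\vec a_{\A_{t_1}}$. Fixing $\vec x_{\backslash\ell m}$ therefore selects a definite second factor $P_{\A_{t_1},\vec x_{\A_{t_1}},\vec a_{\A_{t_1}}}$, which is a deterministic $(\Ptn\backslash\{\A_{t_1}\})$-causal correlation still containing both $\A_\ell$ and $\A_m$ but with strictly fewer subsets. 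The induction hypothesis applies to this sub-correlation, for the residual external input $\vec x_{(\N\backslash\A_{t_1})\backslash(\A_\ell\cup\A_m)}$ (again a part of $\vec x_{\backslash\ell m}$), yielding $\A_\ell\prec\A_m$ or $\A_m\prec\A_\ell$ there. Finally I transfer this back to $P$: since all distributions are deterministic, the marginal $P(\vec a_{\A_\ell}|\vec x)$ coincides with the corresponding marginal of the second factor as a function of $\vec x_{\N\backslash\A_{t_1}}$, so the independence of $\vec x_{\A_m}$ (respectively $\vec x_{\A_\ell}$) carries over verbatim.

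The main obstacle I anticipate is bookkeeping rather than conceptual: one must track carefully that fixing $\vec x_{\backslash\ell m}$ fixes both $\vec x_{\A_{t_1}}$ and the determined output $\vec a_{\A_{t_1}}$ --- and hence pins down which second factor is in play --- and that the external inputs handed to the recursion genuinely form a sub-vector of $\vec x_{\backslash\ell m}$, so that the per-$\vec x_{\backslash\ell m}$ dichotomy is preserved at each step. The hard part is to keep the whole argument local to a single fixed $\vec x_{\backslash\ell m}$ throughout, since the chosen order may legitimately differ for different values of $\vec x_{\backslash\ell m}$, exactly as the ``not necessarily unique'' caveat allows; attempting instead to establish one global ordering would be incorrect.
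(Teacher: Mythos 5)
Your proof is correct and follows essentially the same route as the paper, which simply asserts the proposition as a consequence of the observation (end of Sec.~\ref{subsec:Pcausal}) that a deterministic $\Ptn$-causal correlation unfolds the subsets one at a time, with each step determined by the inputs of the subsets already processed. Your induction on $|\Ptn|$, with the key point that fixing $\vec x_{\backslash \ell m}$ pins down both $\vec x_{\A_{t_1}}$ and (by determinism) $\vec a_{\A_{t_1}}$ and hence the residual $(\Ptn\backslash\{\A_{t_1}\})$-causal correlation until one of $\A_\ell,\A_m$ acts, is exactly the formalisation of that picture.
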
 

To derive a $\Ptn$-causal bound for $J_2(N)$ for a given partition $\Ptn=\{\A_1,\dots,\A_{|\Ptn|}\}$, it is sufficient to find a bound that holds for any deterministic $\Ptn$-causal correlation $P$. We will bound $J_2(N)$ by bounding each individual term $L_N(i,j)$ in the sum.
There are two cases to be considered: whether \textit{i)} the parties $A_i$ and $A_j$ are in different subsets of $\Ptn$, i.e.\ $i\in\A_\ell$, $j\in \A_m$ with $\ell \neq m$; or \textit{ii)} both parties are in the same subset: $i,j\in \A_\ell$.

\begin{enumerate}

\item[\textit{i)}] According to Proposition~\ref{fixtwo}, the inputs $\vec x_{\N\backslash (\A_\ell \cup \A_m)} = \vec 0$ imply either the order $\A_\ell \prec \A_m$, or $\A_m \prec \A_\ell$ for $P$. In the first case,
\begin{align}
	 & \quad P\big(a_i=1|x_ix_j=10,\vec{x}_{\N\backslash \{i,j\}}=\vec{0}\big) \notag \\
	 & \qquad = P\big(a_i=1|x_ix_j=11,\vec{x}_{\N\backslash \{i,j\}}=\vec{0}\big),
\end{align}
which implies
\begin{align}
	 &  P\big(a_i=1|x_ix_j=10,\vec{x}_{\N\backslash \{i,j\}}=\vec{0}\big) \notag \\
	 & \quad - P\big(a_ia_j=11|x_ix_j=11,\vec{x}_{\N\backslash \{i,j\}}=\vec{0}\big) \ge 0,
\end{align}
and therefore (after adding a nonnegative term) $L_N(i,j)\ge 0$.
An analogous argument shows that $L_N(i,j)\ge 0$ also in the case that one has $\A_m \prec \A_\ell$ for $P$ when $\vec x_{\N\backslash (\A_\ell \cup \A_m)} = \vec 0$.

\item[\textit{ii)}] If the parties $A_i$ and $A_j$ belong to the same subset $\A_\ell$, they can share arbitrary correlations and thus win the (conditional) \mbox{LGYNI} game perfectly.
In that case we have $L_N(i,j)\ge -1$, which is the minimum algebraic bound.

\end{enumerate}

Combining the two cases, we thus have, for any $\Ptn$-causal correlation,
\begin{align}
	J_2(N) & =\sum_{\{i,j\}\subset \N} L_N(i,j) \notag \\[-1mm]
	& \ge \hspace{-3mm} \sum_{\substack{\{i,j\}\in \A_\ell \\ \textrm{for some } \A_\ell \in \Ptn }} \!\!\!\!\!\! (-1) \notag\\ &= \ - \sum_{\A_\ell \in \Ptn} \binom{|\A_\ell|}{2} \coloneqq L(\Ptn).  \label{eq:J2N_LP}
\end{align}

\medskip

In order to prove the $M$-causal bound~\eqref{eq:McausalIneq}, we shall now prove that among all partitions $\Ptn$ containing a fixed number $\mathfrak{m}$ of subsets, the quantity $L(\Ptn)$ defined above is minimal when $\Ptn$ consists of $\mathfrak{m}-1$ singletons, and one subset containing the remaining $N-\mathfrak{m}+1$ parties.
Assume for the sake of contradiction that this is not the case, so that the minimum is obtained for a partition $\Ptn$ that contains at least two subsets $\A_\ell$ and $\A_m$ that are not singletons, for  which we assume $|\A_\ell| \ge |\A_m|\, (\ge 2)$. 
Let then $k \in \A_m$, and define the partition $\Ptn'$ obtained from $\Ptn$ by replacing $\A_\ell$ and $\A_m$ by $\A_\ell' = \A_\ell \cup \{k\}$ and $\A_m' = \A_m \backslash\{k\}$, respectively (note that the assumption that $|\A_m| \ge 2$ ensures that $\A_m'$ remains nonempty). 
One then has
\begin{align}\label{eq:Ldiff1}
	& L(\Ptn')-L(\Ptn) \notag \\
	& \quad = - \Big[ \binom{|\A_\ell'|}{2} + \binom{|\A_m'|}{2} \Big] + \Big[ \binom{|\A_\ell|}{2} + \binom{|\A_m|}{2} \Big] \notag \\
	& \quad = - |\A_\ell| + |\A_m| - 1 \ < \ 0,
\end{align}
in contradiction with the assumption that $\Ptn$ minimised $L$. 
For a given $N$ it then follows that
\begin{equation}
	\min_{\Ptn:|\Ptn|=\mathfrak{m}}L(\Ptn)=-\binom{N-\mathfrak{m}+1}{2},
\end{equation}
and therefore, from Eq.~\eqref{eq:J2N_LP},
\begin{equation}
	J_2(N) \ge -\binom{N-|\Ptn|+1}{2}.
\end{equation}
Finally, we note that $|\Ptn| \ge M$ implies $-\binom{N-|\Ptn|+1}{2} \ge -\binom{N-M+1}{2}$, which concludes the proof that Eq.~\eqref{eq:McausalIneq} holds for all $M$-causal correlations.

\medskip

In order now to prove the bound~\eqref{sizeSbound} for size-$S$-causal correlations, we show that among all partitions $\Ptn$ with $\mathfrak{s}(\Ptn)\le S$, $L(\Ptn)$ from Eq.~\eqref{eq:J2N_LP} is minimised for the partition containing $\left\lfloor \frac{N}{S}\right\rfloor$ groups of $S$ parties, and (if $N$ is not a multiple of $S$) a last group with the remaining $N - \left\lfloor \frac{N}{S}\right\rfloor S$ parties---for which $L(\Ptn)$ is indeed equal to the right-hand side of Eq.~\eqref{sizeSbound}. 
Assume again for the sake of contradiction that this is not the case, so that the minimum is obtained for a partition $\Ptn$ containing at least two subsets $\A_\ell$ and $\A_m$ of less than $S$ parties, for which we take $|\A_m|\le|\A_\ell| < S$.
If $|\A_m|>1$, one can follow the same reasoning as in the proof of the $M$-causal bound above: take $k\in\A_m$ and consider the partition $\Ptn'$ obtained by replacing $\A_\ell$ and $\A_m$ by $\A_\ell'=\A_\ell\cup\{k\}$ and $\A_m'=\A_m\setminus\{k\}$, respectively.
Note that since we assumed $|\A_\ell|<S$, we have $|\A_\ell'|\le S$ and $\mathfrak{s}(\Ptn')\le S$.
Eq.~\eqref{eq:Ldiff1} then holds again, in contradiction with the assumption that $\Ptn$ minimised $L$.
In the case when $|\A_m|=1$, consider instead the partition $\Ptn'$ formed by merging $\A_\ell$ and $\A_m$ into a new subset $\A_\ell'=\A_\ell\cup\A_m$ (so that $|\A_\ell'|=|\A_\ell|+1 \le S$ and we still have $\mathfrak{s}(\Ptn')\le S$).
We then have
\begin{align}\label{eq:Ldiff2}
	L(\Ptn')-L(\Ptn) &= - \binom{|\A_\ell'|}{2} +  \binom{|\A_\ell|}{2} \notag\\
	 &= - |\A_\ell| \ < \ 0,
\end{align}
again in contradiction with the assumption that $\Ptn$ minimised $L$, which concludes the proof that Eq.~\eqref{sizeSbound} holds for all size-$S$-causal correlations.

\section{Separation of $\Ptn$-causal polytopes} \label{Pseparationproof}

In this appendix we shall prove Theorem~\ref{Pseparation}, which states that there are no nontrivial inclusions among $\Ptn$-causal polytopes. Before that, we start by introducing a useful family of deterministic $\Ptn$-causal correlations.

\subsection{A family of deterministic $\Ptn$-causal correlations}
\label{app_subsec:P_P_sigma}

The $N$-partite correlations $P_{\Ptn,\sigma}^\text{det}$ we introduce here are defined for a given partition $\Ptn = \{ \A_1, \ldots, \A_{|\Ptn|} \}$ of $\N$ and a given permutation $\sigma$ of its $|\Ptn|$ elements.

We consider the lazy scenario, where each party has a binary input $x_k = 0,1$ with a fixed output $a_k = 0$ for $x_k = 0$, and a binary output $a_k = 0,1$ for $x_k = 1$.
For each subset $\A \in \Ptn$ and a vector of inputs $\vec x$, we define the bit
\begin{equation} 
z_\A \coloneqq \prod_{k\in \A} x_k.
\end{equation}
We then define the deterministic response function $\vec\alpha_{\Ptn,\sigma}$ such that, for each party $A_k$ belonging to a subset $\A_\ell$ of $\Ptn$, we have
\begin{equation} \label{eq:def_resp_fcn_Ppsigma}
\big(\vec\alpha_{\Ptn,\sigma}(\vec{x})\big)_k \ \coloneqq \!\! \prod_{m:\,\sigma(m) \leq \sigma(\ell)} \!\!\! \! z_{\A_m}.
\end{equation}
The correlation $P_{\Ptn,\sigma}^\text{det}$ is then defined as
\begin{equation} \label{eq:defPpsigma}
P_{\Ptn,\sigma}^\text{det}(\vec a|\vec x) \ \coloneqq \delta_{\vec a,\vec\alpha_{\Ptn,\sigma}(\vec{x})}.
\end{equation}

In other words, each party $A_k$ in some subset $\A_\ell \in \Ptn$ outputs the product of the inputs of all parties that came before itself according to the partition $\Ptn$ and the causal order $\A_{\sigma(1)} \prec \A_{\sigma(2)} \prec \cdots \prec \A_{\sigma(|\Ptn|)}$ defined by the permutation $\sigma$, including all parties in the same subset $\A_\ell$. 
Clearly the correlation $P_{\Ptn,\sigma}^\text{det}$ is compatible with this fixed causal order, and is therefore $\Ptn$-causal; as it is deterministic, it corresponds to a vertex of the $\Ptn$-causal polytope.

Note that each party outputs $a_k=0$ whenever $x_k=0$, as required in the lazy scenario. 
The correlations $P_{\Ptn,\sigma}^\text{det}$ can also straightforwardly be generalised to more complex scenarios with more inputs and outputs, by simply never outputting the other possible outputs, and, e.g., always outputting $0$ for any other possible input.
Hence, the proofs below, which use $P_{\Ptn,\sigma}^\text{det}$ as an explicit example, apply to any scenario where each party has at least two possible inputs, and at least two possible outputs for one of their inputs.

\subsection{Proof of Theorem~\ref{Pseparation}}
\label{app_subsec:proof_Pseparation}

Coming back to the theorem, we shall prove that the $\Ptn$-causal polytope is not contained in the $\Ptn'$-causal one by exhibiting a $\Ptn$-causal correlation (from the family introduced above) that is not $\Ptn'$-causal. 
The proof that the $\Ptn'$-causal polytope is not contained in the $\Ptn$-causal one then follows by symmetry. 
We distinguish two cases, whether \textit{i)} $\Ptn'$ is a coarse-graining of $\Ptn$ or \textit{ii)} $\Ptn'$ is not a coarse-graining of $\Ptn$.

\begin{enumerate}

\item[\textit{i)}] If $\Ptn'$ (with $|\Ptn'|>1$) is a coarse-graining of $\Ptn$, then one can find two subsets $\A_\ell$ and $\A_{\ell'}$ in $\Ptn$ that are grouped together in some subset $\A_{\ell \ell'}'$ in $\Ptn'$, and a third subset $\A_m$ in $\Ptn$ that is contained in a different subset $\A_m'$ of $\Ptn'$.
Let $\sigma$ be a permutation of $\Ptn$ which defines a causal order between its elements such that $\A_\ell \prec \A_m \prec \A_{\ell'}$.
The correlation $P_{\Ptn,\sigma}^\text{det}$ as defined in Eq.~\eqref{eq:defPpsigma} is then $\Ptn$-causal, but not $\Ptn'$-causal.
Intuitively, this is because we cannot order $\A_{\ell \ell'}'$ (in which $\A_\ell$ and $\A_{\ell'}$ are grouped together) against $\A_m'$ (which contains $\A_m$).
More specifically, for $\vec x_{\N \backslash (\A_\ell \cup \A_{\ell'} \cup \A_m)} = \vec 1$ (so that in particular, $\vec x_{\N \backslash (\A_{\ell \ell'}' \cup \A_m')} = \vec 1$), the response function $\vec\alpha_{\Ptn,\sigma}$ defined in Eq.~\eqref{eq:def_resp_fcn_Ppsigma} gives $a_k=\big(\vec\alpha_{\Ptn,\sigma}(\vec x)\big)_k = z_{\A_\ell}z_{\A_m}z_{\A_{\ell'}}$ if $k \in \A_{\ell'}$ and $a_k=\big(\vec\alpha_{\Ptn,\sigma}(\vec x)\big)_k = z_{\A_\ell}z_{\A_m}$ if $k \in \A_m$. 
Hence, $P(\vec a_{\A_{\ell \ell'}'}|\vec x)$ depends nontrivially on $\vec x_{\A_m'}$ (via $z_{\A_m}$) while $P(\vec a_{\A_m'}|\vec x)$ depends nontrivially on $\vec x_{\A_{\ell \ell'}'}$ (via $z_{\A_\ell}$). 
According to Proposition~\ref{fixtwo}, this implies that $P_{\Ptn,\sigma}^\text{det}$ indeed cannot be $\Ptn'$-causal.

\item[\textit{ii)}] If $\Ptn'$ is not a coarse-graining of $\Ptn$, then one can find two parties $A_i, A_j$ that belong to the same subset $\A_{ij}$ of $\Ptn$, but belong to two distinct subsets of $\Ptn'$, i.e.\ $A_i \in \A_i'$, $A_j \in \A_j'$.
Let now $\sigma$ be \textit{any} permutation of $\Ptn$. The correlation $P_{\Ptn,\sigma}^\text{det}$ as defined in Eq.~\eqref{eq:defPpsigma} is then $\Ptn$-causal, but not $\Ptn'$-causal.
Intuitively, this is because the parties $A_i$ and $A_j$ cannot be separated in the definition of $P_{\Ptn,\sigma}^\text{det}$.
More specifically, for $\vec x_{\N \backslash \{i,j\}} = \vec 1$ (so that in particular, $\vec x_{\N \backslash (\A_i' \cup \A_j')} = \vec 1$), the response function $\vec\alpha_{\Ptn,\sigma}$ gives $a_k=\big(\vec\alpha_{\Ptn,\sigma}(\vec x)\big)_k = z_{\A_{ij}} =x_i x_j$ for both $k=i$ and $k=j$. 
Hence, $P(\vec a_{\A_i'}|\vec x)$ depends nontrivially on $\vec x_{\A_j'}$ (via $x_j$) while $P(\vec a_{\A_j'}|\vec x)$ depends nontrivially on $\vec x_{\A_i'}$ (via $x_i$). 
According to Proposition~\ref{fixtwo}, this implies that $P_{\Ptn,\sigma}^\text{det}$ indeed cannot be $\Ptn'$-causal.

\end{enumerate}

\section{A 4-partite fully causal correlation with dynamical order that is not a convex mixture of $\Ptn_j'$-causal correlations with $|\Ptn_j'| = 3$.} \label{dynamicalcounter}

We provide here an explicit counterexample to the question raised at the end of Sec.~\ref{subsec:nonincl_Pcausal}, of whether a $\Ptn$-causal correlation can always be written as a convex combination of $\Ptn_j'$-causal correlations for various partitions $\Ptn_j'$ with a fixed number of subsets $|\Ptn_j'| = \mathfrak{m}' < |\Ptn|$.

\begin{figure}
	\begin{center}
		\includegraphics[width=.7\columnwidth]{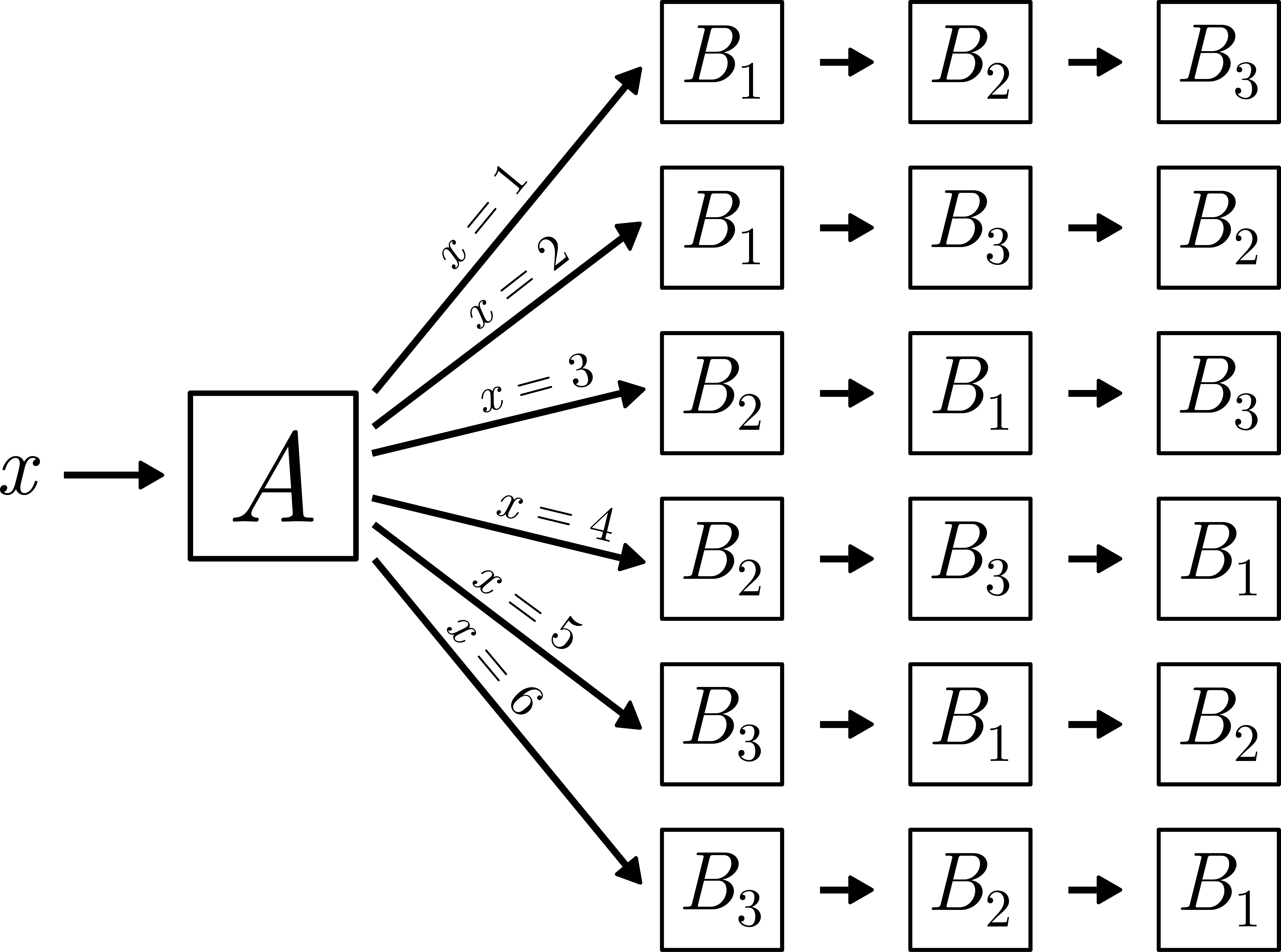}
	\end{center}
\caption{The causal structure sketched above provides an example of a 4-partite fully causal correlation that is not a convex mixture of $\Ptn_j'$-causal correlations with $|\Ptn_j'| = 3$ (see text for details).}
\label{fig:4causal}
\end{figure}

As we noted, such a counterexample requires $\mathfrak{m}' \ge 3$ (and hence $|\Ptn| \ge 4$), as well as a dynamical causal order.
Consider thus a 4-partite scenario, with parties $A$, $B_1$, $B_2$ and $B_3$. 
Party $A$ receives as input a 6-valued variable $x$ (and has no output); 
$A$'s input determines the causal order of the three subsequent parties $B_k$ (see Fig.~\ref{fig:4causal}), with each possible value of $x$ corresponding to one of the six possible permutations, denoted by $\sigma_x$. For parties $B_k$ we consider the lazy scenario, with inputs $y_k \in\{0,1\}$ and outputs $b_k = 0$  if $y_k=0$, $b_k \in\{0,1\}$ if $y_k=1$. 
We then define the deterministic correlation $P^{\textrm{det}}$ by the response functions
\begin{equation}
b_{\sigma_x(1)} \!=\! 0, \ b_{\sigma_x(2)} \!=\! y_{\sigma_x(1)}y_{\sigma_x(2)}, \ b_{\sigma_x(3)} \!=\! y_{\sigma_x(2)}y_{\sigma_x(3)}.
\end{equation}

While the correlation $P^{\textrm{det}}$ thus obtained is fully causal (i.e., it is $\Ptn$-causal for the ``full partition'' $\Ptn$ such that $|\Ptn|=N=4$), it is not $\Ptn'$-causal for any 3-subset partition $\Ptn'$ of $\{A,B_1,B_2,B_3\}$---which also implies, since $P^{\textrm{det}}$ is deterministic, that it is not decomposable as a convex combination of $\Ptn'$-causal correlations for various 3-subset partitions $\Ptn'$ either.
Indeed, such a $\Ptn'$ would contain (2 singletons and) a pair of parties grouped together, $\{A,B_i\}$ or $\{B_i,B_j\}$. 
Consider the first case first: as $P^{\textrm{det}}$ is deterministic, and the outputs of all parties $B_k$ depend on $x$, any $\Ptn'$-causal correlation should be compatible with the subset $\{A,B_i\}$ being first, with therefore $b_i$ independent of $y_k$ for $k\neq i$; this, however, cannot be because, for every $i=1,2,3$, we can find $x$ such that $i=\sigma_x(2)$, so that $b_i= y_{\sigma_x(1)} y_i$, which depends on $y_k$ with $k = {\sigma_x(1)} \neq i$. 
In the second case where $\Ptn' = \{\{A\},\{B_i,B_j\},\{B_k\}\}$, according to Proposition~\ref{fixtwo}, a deterministic $\Ptn'$-causal correlation must be such that for each given value of $x$ one must either have that $b_i$ and $b_j$ should be independent of $y_k$, or that $b_k$ is independent of $y_i$ and $y_j$; this is however not satisfied for the value of $x$ such that $\sigma_x(1)=i$, $\sigma_x(2)=k$, $\sigma_x(3)=j$.

In short, for any pair of parties there exists some input $x$ of party $A$ for which a third party must act in between the said pair, so that this pair of parties cannot be causally ordered with the other two (singletons of) parties. 
This shows that the correlation $P^{\textrm{det}}$ defined above is indeed not $\Ptn'$-causal for any 3-subset partition $\Ptn'$---and as said above, being deterministic, it is not a convex mixture of $\Ptn'$-causal partitions for various such partitions $\Ptn'$ either.

\section{Proof of completeness of Theorem~\ref{thm:inclusions}}\label{app_sec:inclusions}

In order to prove that Theorem~\ref{thm:inclusions} completely characterises the possible inclusions between $M$-causal and size-$S$-causal polytopes, we first prove the following lemma regarding non-inclusions between $\Ptn$-causal polytopes (which is perhaps of interest in and of itself).
\begin{lemma}\label{noinclusions}
Given a partition $\Ptn$ and a set of partitions $\left\{\Ptn_1',\dots,\Ptn_r'\right\}$, none of which is a coarse-graining of $\Ptn$, the convex hull of the $\Ptn_j'$-causal polytopes, $j=1,\dots,r$, does not contain the $\Ptn$-causal polytope.
\end{lemma}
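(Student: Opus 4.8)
The plan is to reduce the statement about the convex hull to a statement about a single deterministic correlation, exploiting the fact that a deterministic distribution can only be written as a convex combination of distributions all equal to itself. Concretely, suppose some deterministic $\Ptn$-causal correlation $P^\text{det}$ lay in the convex hull of the $\Ptn_j'$-causal polytopes. Since each such polytope is convex, $P^\text{det}$ could be written as $P^\text{det} = \sum_j \lambda_j\, Q_j$ with $\lambda_j \ge 0$, $\sum_j \lambda_j = 1$, and each $Q_j$ a $\Ptn_j'$-causal correlation. Evaluating this, for every input $\vec x$, on the unique output vector $\vec a = \vec\alpha(\vec x)$ for which $P^\text{det}(\vec a|\vec x) = 1$ forces $Q_j(\vec a|\vec x) = 1$ for each $j$ with $\lambda_j > 0$; hence $Q_j = P^\text{det}$, and $P^\text{det}$ would itself be $\Ptn_j'$-causal for that $j$. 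Thus it suffices to exhibit a single deterministic $\Ptn$-causal correlation that is \emph{not} $\Ptn_j'$-causal for any $j$.

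For this I would reuse the family $P_{\Ptn,\sigma}^\text{det}$ introduced in Appendix~\ref{app_subsec:P_P_sigma}, which is deterministic and $\Ptn$-causal (hence a vertex of the $\Ptn$-causal polytope) for every permutation $\sigma$. The hypothesis that none of the $\Ptn_j'$ is a coarse-graining of $\Ptn$ means that each $\Ptn_j'$ falls under case~\textit{ii)} of the proof of Theorem~\ref{Pseparation}: for every $j$ there exist two parties lying in a common subset of $\Ptn$ but in two distinct subsets of $\Ptn_j'$. The crucial feature of that case is that its conclusion holds for \emph{any} choice of permutation $\sigma$, so one fixed $\sigma$ works simultaneously for all $j$. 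I would therefore fix a single arbitrary $\sigma$ and invoke case~\textit{ii)} (via Proposition~\ref{fixtwo}) separately for each $j$ to conclude that $P_{\Ptn,\sigma}^\text{det}$ is not $\Ptn_j'$-causal for any $j$.

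Combining the two ingredients then finishes the argument: $P_{\Ptn,\sigma}^\text{det}$ is a deterministic $\Ptn$-causal correlation that is $\Ptn_j'$-causal for no $j$, so by the first step it cannot belong to the convex hull of the $\Ptn_j'$-causal polytopes; since it is nevertheless a point of the $\Ptn$-causal polytope, the latter is not contained in that convex hull.

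The step I expect to be the only genuinely delicate one is ensuring that a \emph{single} correlation defeats \emph{all} the $\Ptn_j'$ at once: a priori each $\Ptn_j'$ might demand a different witness, and a point can lie outside every individual polytope yet inside their convex hull. Both difficulties are dissolved here. The determinism of $P_{\Ptn,\sigma}^\text{det}$ upgrades ``outside every polytope'' to ``outside the convex hull'', and the validity of case~\textit{ii)} of Theorem~\ref{Pseparation} for arbitrary $\sigma$ lets one fixed witness serve every $j$ simultaneously. The remaining verifications—that each non-coarse-graining indeed yields such a split pair, and the routine evaluation underlying the determinism argument—are direct.
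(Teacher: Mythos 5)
Your proposal is correct and follows essentially the same route as the paper: reduce to exhibiting a single deterministic $\Ptn$-causal correlation outside every $\Ptn_j'$-causal polytope (using that a deterministic distribution cannot be a nontrivial convex mixture), then take $P_{\Ptn,\sigma}^\text{det}$ for an arbitrary fixed $\sigma$ and invoke case~\textit{ii)} of the proof of Theorem~\ref{Pseparation} for each $j$. The only difference is that you spell out the determinism step explicitly, which the paper leaves as a one-line remark.
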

\begin{proof}
	It suffices here to show that, if no partition among $\Ptn_1',\dots,\Ptn_r'$ is a coarse-graining of a partition $\Ptn$, it is possible to find a deterministic $\Ptn$-causal correlation that is not $\Ptn_j'$-causal for any $j=1,\dots,r$. 
	The given correlation being deterministic, this will indeed imply that it is also not a convex mixture of $\Ptn_j'$-causal correlations.

	We can again take the correlation $P_{\Ptn,\sigma}^\text{det}$ defined in Eq.~\eqref{eq:defPpsigma}, for any choice of the permutation $\sigma$. 
	Recall that for this correlation the output of each party depends nontrivially on the inputs of all parties in the same subset. 
	As already established for case \emph{ii)} in Appendix~\ref{app_subsec:proof_Pseparation}, no such correlation is $\Ptn_j'$-causal for any partition $\Ptn_j'$ that is not a coarse-graining of $\Ptn$, which proves the result.
\end{proof}
	Note that the assumption that none of the partitions $\Ptn_j'$ is a coarse-graining of $\Ptn$ is crucial in the above proof, and the conclusion of the theorem does not necessarily hold otherwise: as noted in Sec.~\ref{subsec:nonincl_Pcausal} already, Eq.~\eqref{eqdef:P_causal} indeed shows that a $\Ptn$-causal correlation, with $\Ptn = \{\A_\ell\}_\ell$, can be written as a convex combination of $\Ptn_\ell'$-causal correlations, with the partitions $\Ptn_\ell' = \{\A_\ell, \N \backslash \A_\ell\}$ being coarse-grainings of $\Ptn$.

\medskip

Let us now prove the completeness of Theorem~\ref{thm:inclusions}.
To this end, let us consider first a partition $\Ptn$ with $|\Ptn| = M$ that consists of $M-1$ singletons and an $(N{-}M{+}1)$-partite subset. 
Such a partition saturates the first inequality in Eq.~\eqref{eq:ineqs_S_M}, i.e., it satisfies $\mathfrak{s}(\Ptn) = N-M+1$. 
Let us then take $S < N-M+1$. 
The size-$S$-causal polytope is, by definition, the convex hull of all $\Ptn_j'$-causal polytopes for all partitions $\Ptn_j'$ with $\mathfrak{s}(\Ptn_j') \le S$. 
None of these partitions can be a coarse-graining of $\Ptn$, as this would imply (since coarse-graining can only increase the size of the largest subset in a partition) $\mathfrak{s}(\Ptn_j') \ge \mathfrak{s}(\Ptn) = N-M+1 > S$, in contradiction with $\mathfrak{s}(\Ptn_j') \le S$. 
But then Lemma~\ref{noinclusions} shows that the $\Ptn$-causal polytope is not contained in the size-$S$-causal polytope, and (since $|\Ptn| = M$) this thus implies that the $M$-causal polytope is not contained in the size-$S$-causal polytope. 

Similarly, consider a partition $\Ptn$ with $\mathfrak{s}(\Ptn) = S$, that consists of $\left\lfloor \frac{N}{S}\right\rfloor$ groups of $S$ parties and, if $N$ is not a multiple of $S$, a final group containing the remaining $N {-} \left\lfloor \frac{N}{S}\right\rfloor S$ parties. 
Such a partition thus contains $|\Ptn| = \left\lceil \frac{N}{S}\right\rceil$ subsets. 
Let us now take $M > \left\lceil \frac{N}{S}\right\rceil$. 
The $M$-causal polytope is, again by definition, the convex hull of all $\Ptn_j'$-causal polytopes for all partitions $\Ptn_j'$ with $|\Ptn_j'| \ge M$. 
None of these partitions can be a coarse-graining of $\Ptn$, as this would imply (since coarse-graining can only decrease the number of subsets in a partition) $|\Ptn_j'| \le |\Ptn| = \left\lceil \frac{N}{S}\right\rceil < M$, in contradiction with $|\Ptn_j'| \ge M$. 
Lemma~\ref{noinclusions} then again shows that the $\Ptn$-causal polytope is not contained in the $M$-causal polytope, and (since $\mathfrak{s}(\Ptn) = S$) this then implies that the size-$S$-causal polytope is not contained in the $M$-causal polytope, which completes the proof.

Finally, let us also note that since no partition $\Ptn'$ with $\left|\Ptn'\right| \ge M' > M$ is a coarse-graining of any partition $\Ptn$ with $|\Ptn| = M$, and since no partition $\Ptn'$ with $\mathfrak{s}(\Ptn') \le S' < S$ is a coarse-graining of any partition $\Ptn$ with $\mathfrak{s}(\Ptn) = S$, invoking Lemma~\ref{noinclusions} also provides a proof (as an alternative to our use of the families of $M$-causal and size-$S$-causal inequalities~\eqref{eq:McausalIneq} and~\eqref{sizeSbound} before) that all inclusions among $M$-causal and among size-$S$-causal polytopes are strict.

\end{document}